\newtheorem{theorem}{\bf{Theorem}}[section]
\newtheorem{lem}[theorem]{Lemma}
\theoremstyle{plain}
\newenvironment{definition}[1][Definition]{\begin{trivlist}
\item[\hskip \labelsep {\bfseries #1}]}{\end{trivlist}}
\newcounter{assump} 
\newcounter{rem} 
\newcounter{propno} 
\newenvironment{property}[1][Property \arabic{propno}]{\refstepcounter{propno} \begin{trivlist} 
\item[\hskip \labelsep {\bfseries #1}]}{\end{trivlist}}
\newcounter{algno} 
\newenvironment{alg}{\refstepcounter{algno}\linespread{1.5}\small\tabular}{\endtabular}
\title{\LARGE \bf
Communication-Free Distributed Coverage for Networked Systems
}
\author{A. Yasin Yaz{\i}c{\i}o\u{g}lu, Magnus Egerstedt, and Jeff S. Shamma
\thanks{This work was supported by ONR project \#N00014-09-1-0751. \newline \indent A. Yasin Yaz{\i}c{\i}o\u{g}lu is with the Laboratory for Information and Decision Systems, Massachusetts Institute of Technology, {\tt yasiny@mit.edu}. \newline \indent
Magnus Egerstedt is with the School of Electrical and Computer Engineering, Georgia Institute of Technology, {\tt magnus@gatech.edu}.\newline \indent
Jeff S. Shamma is with the School of Electrical and Computer Engineering, Georgia Institute of Technology, {\tt shamma@gatech.edu}, and with King Abdullah University of Science and Technology (KAUST), {\tt jeff.shamma@kaust.edu.sa.}}}
\begin{document}

\maketitle
\begin{abstract}
In this paper, we present a communication-free algorithm for distributed coverage of an arbitrary network by a group of mobile agents with local sensing capabilities. The network is represented as a graph, and the agents are arbitrarily deployed on some nodes of the graph. Any node of the graph is covered if it is within the sensing range of at least one agent. The agents are mobile devices that aim to explore the graph and to optimize their locations in a decentralized fashion by relying only on their sensory inputs. We formulate this problem in a game theoretic setting and propose a communication-free learning algorithm for maximizing the coverage. 

\end{abstract}


\section{Introduction}

 In many networked systems, a typical task is to provide some service such as security or maintenance via some agents with limited capabilities (e.g., \cite{Goddard05, Du03, Berbeglia10}). One way of achieving this task is to solve a locational optimization problem (e.g., \cite{Reese06, Megiddo83,Khuller99, Owen98, Caprara2000}) and let each agent serve some part of the network around its assigned position. In the absence of a centralized mechanism, the agents are faced with a distributed coverage control problem, where their objective is to optimize their locations by following some decentralized controllers.  
 
 

 
 
 Distributed coverage control is widely studied on continuous domains (e.g., \cite{Howard02}-\cite{Schwager09}).
One possible approach is to employ potential fields to drive each agent away from the nearby agents and obstacles (e.g., \cite{Howard02,Poduri04}). Alternatively, a prevailing approach introduced in \cite{Cortes04} is to model the underlying locational optimization problem as a continuous $p$-median problem and to employ Lloyd's algorithm \cite{Lloyd82}. As such, the agents are driven onto a local optimum, i.e. a centroidal Voronoi configuration, where each point in the space is assigned to the nearest agent, and each agent is located at the center of mass of its own region.
Later on, this method was extended for agents with distance-limited sensing and communications (e.g.,  \cite{Cortes05}) and limited power (e.g., \cite{Kwok10}), as well as for heterogeneous agents covering non-convex regions (e.g., \cite{Pimenta08}). Also, the requirement of sensing density functions was relaxed by incorporating methods from adaptive control and learning (e.g., \cite{Schwager09}). 

In some studies, distributed coverage control was studied on discrete spaces represented as graphs (e.g., \cite{Durham09,Yun12,Zhu13,Yasin13NecSys}). One possible approach is to achieve a centroidal Voronoi partition of the graph via pairwise gossip algorithms (e.g., \cite{Durham09}) or via asynchronous greedy updates (e.g., \cite{Yun12}). Alternatively, distributed coverage control on discrete spaces can be studied in a game theoretic framework (e.g., \cite{Zhu13,Yasin13NecSys}). Game theoretic methods have been used to solve many cooperative control problems such as vehicle-target assignment (e.g., \cite{Arslan07}), dynamic vehicle routing (e.g. \cite{Arsie09}), cooperative communication (e.g., \cite{Huang08}), and coverage optimization (e.g., \cite{Zhu13,Yasin13NecSys}). In \cite{Zhu13}, sensors with variable footprints achieve power-aware optimal coverage on a discretized space. In \cite{Yasin13NecSys}, a group of heterogeneous mobile agents are driven on a graph to maximize the number of covered nodes. 

In this paper, we study a distributed coverage control problem on graphs in a game theoretic setting. In this problem, mobile agents are arbitrarily deployed on an unknown graph. Each agent is assumed to sense the local graph structure and the presence of other agents (if any) within its sensing range. Any node of the graph is covered if it is within the sensing range of at least one agent. The objective of the agents is to maximize the number of covered nodes by optimizing their locations on the graph.  We present a game theoretic formulation for this coverage control problem. We particularly focus on a communication-free setting, where each agent should be driven via only its sensory inputs. In that case, the agents do not observe their exact utilities in the corresponding game. Accordingly, we propose a learning algorithm for driving the agent positions based on some estimated utilities. Using the proposed method, the agents maintain optimal coverage with an arbitrarily high probability as time goes to infinity.

The organization of this paper is as follows: Section \ref{cover} presents the distributed graph coverage problem. Section \ref{game} sets up the game-theoretic formulation of the problem. Section \ref{covmax} presents a solution that requires some explicit communications among the agents. The proposed communication-free solution is presented in Section \ref{sol}. Some simulation results for the proposed method are presented in Section \ref{sims}. Finally, Section \ref{conclusion} concludes the paper. 

\section{Distributed Graph Coverage}
\label{cover}
In this section, we present the distributed graph coverage (DGC) problem, where the goal is to maximize the number of covered nodes by driving the agents with limited sensing and mobility capabilities to optimal locations on a graph. First, some graph theory preliminaries are presented. 

\subsection{Graph Theory Concepts}
An undirected graph $\mathcal{G}=(V,E)$ consists of a \emph{node set} $V$ and an \emph{edge set} $E \subseteq V \times V$. For an undirected graph, the edge set consists of unordered node pairs $(v,v')$ denoting that the nodes $v$ and $v'$ are adjacent. 

A \emph{path} is a sequence of nodes such that each node is adjacent to the preceding node in the sequence. For any two nodes $v$ and $v'$, the \emph{distance} between the nodes $d(v,v')$ is the number of edges in a shortest path between $v$ and $v'$. A graph is \emph{connected} if the distance between any pair of nodes is finite. 

The set of nodes containing a node $v$ and all the nodes adjacent to $v$ is called the \emph{(closed) neighborhood} of $v$, and it is denoted as $\mathcal{N}_v$. For any $\delta \geq 0$, the $\delta$-neighborhood of $v$, $\mathcal{N}^{\delta}_{v}$, is the set of nodes that are at most $\delta$ away from $v$, i.e.

\begin{equation}
\label{dneigh}
\mathcal{N}^{\delta}_v = \{v' \in V \mid d(v,v') \leq \delta\}.
\end{equation}

For any $\mathcal{G}=(V,E)$,  an \emph{induced subgraph}, $\mathcal{G}[V_s]$, consists of the vertices, $V_s \subseteq V$, and the edges whose endpoints are both in $V_s$. 


\subsection{Problem Formulation}



Consider a connected undirected graph, $\mathcal{G} = (V,E)$, and let $I=\{1, 2, \hdots, m\}$ denote a set of $m$ mobile agents arbitrarily deployed on some nodes of the graph.  Let each agent have a sensing range, $\delta$. We assume that each agent, $i$, can sense the subgraph induced by the nodes in $\mathcal{N}^{\delta}_{v_i}$ and the presence of other agents (if any) within its $\delta$-neighborhood. As such, each $i  \in I$ located at $v_i \in V$ covers all the nodes in $\mathcal{N}^{\delta}_{v_i}$. Any node of the graph is covered if it is included in the $\delta$-neighborhood of at least one agent, and the set of covered nodes, $V_c\subseteq V$, is given as
\begin{equation}
\label{Vc}
V_c (v_1, \hdots, v_m)=\bigcup_{i=1}^m\mathcal{N}^{\delta}_{v_i}.
\end{equation} 

The objective in the distributed graph coverage (DGC) problem is to have the agents update their positions over time in a distributed manner to maximize the number of covered nodes, i.e.
\begin{equation}
\label{objective}
|V_c (v_1(t), \hdots, v_m(t))|,
\end{equation} 
where each $v_i(t) \in V$ is the position of agent $i$ at time $t$.

In order to achieve optimal coverage in a distributed fashion, the agents need some local rules to follow. In general, a rule is considered to be local if its execution by an agent requires only some information available within a small distance from the agent. In this paper, we consider a discrete time dynamics and we assume that each agent can either maintain its position or move to an adjacent node in the next time step, i.e.
\begin{equation}
\label{cmove}
d(v_i(t+1),v_i(t)) \leq 1, \;  \forall i \in \{1,2, \hdots, m\}.
\end{equation} 





\subsection{Solution Approach}
In the DGC problem,  a group of mobile agents explore an unknown graph and aim to cover as many nodes as possible. As such, the underlying locational optimization problem is similar to the maximum coverage problem (e.g., \cite{Megiddo83,Khuller99}). Such NP-hard problems are typically tackled by finding sufficiently good approximate solutions through fast algorithms (e.g. \cite{Jia02,Kuhn05,Abrams04}). Similarly, in many distributed coverage control studies, a locational objective function is optimized by the agents aiming for the best local improvements (e.g., \cite{Cortes04}-\cite{Yun12}). Such a distributed greedy approach  can be employed to solve the DGC problem. Accordingly, the agents may move locally on the graph to maximally improve their local coverage. In that case, the resulting performance would significantly depend on the graph structure and the initial configuration. This method may rapidly lead to a reasonable approximate solution if the agents start with a sufficiently good initial coverage or if the interaction graph satisfies some structural properties. However, it may also lead to arbitrarily poor configurations for arbitrary graphs and initial conditions. For instance, consider the scenario in Fig. \ref{counterex}, where 2 agents with sensing ranges $\delta=1$ can achieve a globally optimal configuration in 2 time steps.  In this example, the initial configuration would be stationary under a greedy approach since none of the agents can improve the coverage by moving to a neighboring node. Note that the performance in Fig. \ref{counterex}a would be arbitrarily poor for any arbitrarily large graph obtained by adding more leaf nodes attached to the unoccupied hub.

\begin{figure}[htb]
\centering
\includegraphics[trim =0mm 0mm 0mm 0mm, clip,scale=0.68]{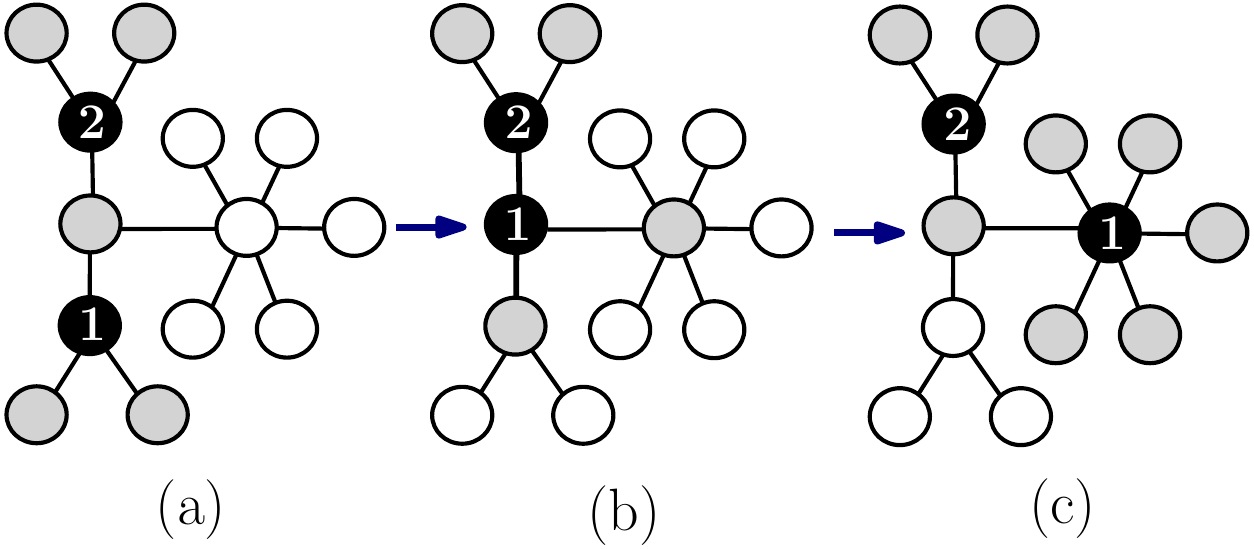}
\caption{A possible trajectory to a globally optimal configuration for two agents on a small graph. The agents have cover ranges $\delta=1$, and they are initially located as in $(a)$. The number of covered nodes (shown as non-white) is reduced in the intermediate step illustrated in (b) to reach the global optima shown in (c).}
\label{counterex}
\end{figure}

In order to ensure efficient coverage for arbitrary graphs and initial configurations, a solution method should occasionally allow for graph exploration at the expense of a better coverage. In this work, we present such a solution by approaching the problem from a game theoretic perspective. In particular, we map the DGC problem to a game, and we design a learning algorithm for the agents to follow in updating their actions.

\section{Game Theoretic Formulation}
\label{game}
In this section, a game-theoretic formulation of the DGC problem is presented. First, some game theory preliminaries are provided.

\subsection{Game Theory Concepts}
A finite \emph{strategic game} $\Gamma= (I, A, U)$ consists of three components: (1) a set of $m$ \emph{players (agents)} $I=\{1, 2, \hdots, m\}$, (2) an $m$-dimensional \emph{action space} $A= A_1 \times A_2 \times \hdots\ \times A_m$, where each $A_i$ is the \emph{action set} of player $i$, and (3) a set of \emph{utility functions} $U= \{U_1, U_2, \hdots , U_m\}$, where each $U_i : A \mapsto \Re$ is a mapping from the action space to real numbers.   

For any \emph{action profile} $a \in A$, let $a_{-i}$ denote the actions of players other than $i$. Using this notation, an action profile $a$ can also be represented as $a=(a_i,a_{-i})$.




A class of games that is widely utilized in cooperative control problems is potential games. A game is called a potential game if there exists a \emph{potential function}, $\phi : A \mapsto \Re$, such that the change of a player's utility resulting form its unilateral deviation from an action profile equals the resulting change in $\phi$. More precisely, for each player $i$, for every $a_i, a'_i \in A_i$, and for all $a_{-i} \in A_{-i}$,  
\begin{equation}
\label{PG}
 U_i(a'_i, a_{-i})- U_i(a_i, a_{-i}) = \phi(a'_i, a_{-i})- \phi(a_i, a_{-i}).
\end{equation}
When a cooperative control problem is mapped to a potential game, usually the game is designed such that its potential function captures the global objective of the control problem. Once a such potential game is designed, some game theoretic learning algorithms such as log-linear learning \cite{Blume93} can be utilized to drive the agent actions to the set of potential maximizers. 






\subsection{DGC Game}
In order to formulate the DGC problem in a game theoretic setting, we design a corresponding game, $\Gamma_{\text{DGC}}$, by defining the action space and the utility functions. More specifically, we design a potential game such that its potential function, $\phi(a)$, captures the global objective of the DGC problem, i.e. 

\begin{equation}
\label{potf}
\phi(a)= |V_c(a)|.
\end{equation}

In the DGC problem, the coverage provided by each agent is determined by the position of the agent. Hence, the action of an agent can be defined as its position on the graph. Accordingly, each action set is equal to the node set of $\mathcal{G}=(V,E)$, i.e.
\begin{equation}
\label{actionset}
A_i= V, \quad \forall i \in I. 
\end{equation}

Then, the utilities should be designed such that $\phi(a)$ in (\ref{potf}) is indeed the potential function for the resulting game. To this end, we design the agent utilities as

\begin{eqnarray}
\label{util}
U_i (a) &=&  |\mathcal{N}_{a_i}^{\delta} \setminus \bigcup_{j \neq i}\mathcal{N}_{a_j}^{\delta}|, \nonumber \\
&=& \sum_{v \in \mathcal{N}_{a_i}^{\delta}} u_i(v, a_{-i}),
\end{eqnarray}
where, for every $v \in \mathcal{N}_{a_i}^{\delta}$, $u_i(v, a_{-i})$ is the partial utility agent $i$ gathers by covering node $v$, and it is defined as
\begin{equation}
\label{smallu}
u_i(v, a_{-i})=  \left\{\begin{array}{ll} 1&\mbox{ if $d(v,a_j) > \delta$ \; $\forall j\neq i$}, \\ 0&\mbox{ otherwise. }\end{array}\right.
\end{equation} 

In the resulting game, each agent gathers a utility equal to the number of nodes that are covered only by itself. Note that this utility is equal to the marginal contribution of the corresponding agent to the number of covered nodes.  
%


\begin{lem}
\label{potgam}
The utilities in (\ref{util}) lead to a potential game $\Gamma_{\emph{DGC}}= (P, A, U)$ with the potential function given in (\ref{potf}).
\end{lem}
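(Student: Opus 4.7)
The plan is to verify the defining identity \eqref{PG} directly by exploiting the fact that the utility $U_i$ in \eqref{util} is precisely the marginal contribution of agent $i$ to the covered set.

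First, I would fix an agent $i$ and an action profile $a_{-i} \in A_{-i}$ of the other agents, and define the auxiliary set
\[
W_{-i} \;=\; \bigcup_{j \neq i} \mathcal{N}_{a_j}^{\delta},
\]
which depends only on $a_{-i}$ and not on $a_i$. Then from \eqref{Vc} and \eqref{potf},
\[
\phi(a_i, a_{-i}) \;=\; \bigl|\mathcal{N}_{a_i}^{\delta} \cup W_{-i}\bigr|
\;=\; |W_{-i}| \;+\; \bigl|\mathcal{N}_{a_i}^{\delta} \setminus W_{-i}\bigr|,
\]
using the standard decomposition $|X \cup Y| = |Y| + |X \setminus Y|$.

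Next, I would identify the second term on the right-hand side with $U_i(a_i, a_{-i})$ from \eqref{util}, yielding
\[
\phi(a_i, a_{-i}) \;=\; |W_{-i}| + U_i(a_i, a_{-i}).
\]
Since $|W_{-i}|$ depends only on $a_{-i}$, for any $a_i, a_i' \in A_i$ the quantity $|W_{-i}|$ cancels when we compute the difference
\[
\phi(a_i', a_{-i}) - \phi(a_i, a_{-i}) \;=\; U_i(a_i', a_{-i}) - U_i(a_i, a_{-i}),
\]
which is exactly the potential-game condition \eqref{PG}. Because $\phi$ as given in \eqref{potf} agrees with $|V_c(a)|$, this completes the verification.

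No step looks delicate here; the only thing to be careful about is making the ``cancellation'' argument explicit, namely that the set $W_{-i}$ truly does not involve $a_i$ (so the union/difference decomposition isolates all the $a_i$-dependence into the marginal-contribution term). Everything else is elementary set algebra, so I do not anticipate a real obstacle in turning this outline into a formal proof.
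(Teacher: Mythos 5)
Your proposal is correct and follows essentially the same route as the paper's own proof: both decompose $\phi(a)$ as $|\mathcal{N}_{a_i}^{\delta} \setminus \bigcup_{j\neq i}\mathcal{N}_{a_j}^{\delta}| + |\bigcup_{j\neq i}\mathcal{N}_{a_j}^{\delta}|$, identify the first term with $U_i$, and observe that the second term is independent of $a_i$ so it cancels in the difference. Naming the set $W_{-i}$ and stating the cancellation explicitly is a clean presentational touch but not a different argument.
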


\begin{proof}
Let $a_i =v_i$ and $a_i'=v_i'$ be two possible actions for agent $i$, and let $a_{-i}$ denote the actions of other agents. Due to (\ref{Vc}) and (\ref{potf}),
 \begin{equation}
\label{exppot}
\phi(a)= |\bigcup_{i \in I}\mathcal{N}^{\delta}_{a_i}|\end{equation}
Using (\ref{util}), for any agent $i$, (\ref{exppot}) can be expanded as
 \begin{equation}
\label{exppot2}
\phi(a) = |\mathcal{N}^{\delta}_{a_i} \setminus \bigcup_{j \neq i}\mathcal{N}^{\delta}_{a_j}|+ |\bigcup_{j \neq i}\mathcal{N}^{\delta}_{a_j}| 
= U_i (a_i,a_{-i})+ |\bigcup_{j \neq i} \mathcal{N}^{\delta_j}_{a_j}|.
\end{equation}
Using (\ref{exppot2}) for any pair of actions $a_i$ and $a_i'$,
 \begin{equation}
\label{potpr}
\phi(a'_i, a_{-i})- \phi(a_i, a_{-i})=U_i (a_i',a_{-i})- U_i (a_i,a_{-i}).
\end{equation}
\end{proof}

\subsection{Learning}
In game theoretic learning, starting from an arbitrary initial configuration, the agents repetitively play a game. At each step $t \in \{0, 1, 2, \hdots \}$, each agent $i \in  I$ plays an action $a_i(t)$ and receives some utility $U_i(a(t))$. In this setting, the agents update their actions in accordance with some learning algorithms. For the DGC problem, the learning process is desired to drive the agent positions to the set of configurations that maximize the number of covered nodes. 

For potential games, a learning algorithm known as \emph{log-linear learning} (LLL) can be used to drive the agents to action profiles that maximize the potential function $\phi(a)$ \cite{Blume93}. Essentially, LLL is a noisy best-response algorithm, and it induces a Markov chain over the action space with a unique limiting distribution, $\mu^*_\epsilon$,  where $\epsilon$ denotes the noise parameter. As the noise parameter, $\epsilon$, goes down to zero, the limiting distribution, $\mu^*_\epsilon$, has an arbitrarily large part of its mass over the set of potential maximizers \cite{Blume93}. However, LLL assumes that at any round each player $i$ has access to all the actions in its action set $A_i$. In general, LLL may not provide potential maximization when the system evolves over constrained action sets, i.e. when each agent $i$ is allowed to choose its next action from only a subset of actions. Note that this is indeed the case for the DGC problem, and each agent has to pick its the next action from the closed neighborhood of its current action $a_i$, 
\begin{equation}
\label{constrA}
A^c_i(a_i)= \mathcal{N}_{a_i} \; \forall i \in I.
\end{equation}
The issue of constrained action sets was addressed in \cite{Marden12}, and a variant learning algorithm called \emph{binary log-linear learning} (BLLL) was presented for such cases. 

In learning algorithms, typically each agent is assumed to measure its current utility. 
For instance, in order to execute LLL or BLLL, the agents need to measure their utilities resulting from their current actions as well as the hypothetical utilities they may gather by unilaterally switching to some other actions. Alternatively, a payoff-based implementation may be utilized to avoid the necessity to compute the hypothetical utilities \cite{Marden12}. Note that, for $\Gamma_{\text{DGC}}$, even the computation of the current utility requires some \textit{explicit communications} since the agents with overlapping coverage are not necessarily within the sensing range of each other. In general, such agents can be up to $2\delta$ apart on the graph.  

\subsection{Stochastic Stability Concepts}
For potential games, noisy best-response algorithms such as LLL or BLLL induce a regular perturbed Markov chain over the action space such that the stochastically stable states are the potential maximizers. The concept of stochastic stability will be extensively used in the remainder of this paper. Hence, we provide some preliminaries prior to our derivations.

\begin{definition} \emph{(Regular Perturbed Markov Chain)}: Let $P$ be the transition matrix of a discrete-time Markov chain over a finite state space $\mathcal{X}$. A perturbed Markov chain with the noise parameter $\epsilon$ is called a regular perturbed Markov chain if
\begin{enumerate}
\item $P_{\epsilon}$ is aperiodic and irreducible for $\epsilon >0$,
\item $\lim_{\epsilon \rightarrow 0} P_{\epsilon}=P$,
\item For any $x,x^+ \in \mathcal{X}$ if $P_{\epsilon}(x,x^+)>0$, then there exists $R(x,x^+) \geq 0$ such that
\begin{equation}
\label{resdef}
0 < \lim_ {\epsilon \rightarrow 0^+} \frac{P_{\epsilon}(x, x^+)}{\epsilon^{R(x,x^+)}} < \infty,
\end{equation}
where $R(x,x^+)$ is called the resistance of the transition from $x$ to $x^+$.
\end{enumerate}
\end{definition}

Any regular perturbed Markov chain, $P_{\epsilon}$, has a unique limiting distribution, $\mu^*_{\epsilon}$, since it is aperiodic and irreducible.
\begin{definition} \emph{(Stochastically Stable State)}: Let $P_{\epsilon}$ denote a regular perturbed Markov chain over a state space, $\mathcal{X}$. Any state, $x\in \mathcal{X}$, is stochastically stable if 
\begin{equation}
\label{sstab}
\lim_ {\epsilon \rightarrow 0^+} \mu^*_{\epsilon}(x) >0.
\end{equation}
\end{definition}

The stochastically stable states of a regular perturbed Markov chain, $P_\epsilon$, can be  characterized through a resistance tree analysis. For any $x \in \mathcal{X}$, a spanning tree rooted at $x$, $\mathcal{T}_x$, is a directed graph, where the nodes correspond to states, directed edges correspond to some feasible state transitions, and there is a unique directed path on $\mathcal{T}_x$ from any state $x' \neq x$ to $x$. The resistance of such a tree, $R(\mathcal{T}_x)$, is defined as the sum of the resistances of its edges, where the resistance of each edge is given as in (\ref{resdef}). $\mathcal{T}^*_x$ is called a minimum resistance tree if  $R(\mathcal{T}^*_x) \leq R(\mathcal{T}_x)$ for any $\mathcal{T}_x$, i.e. any spanning tree rooted at $x$ has at least as much resistance as $\mathcal{T}^*_x$. The stochastic potential of a state, $x$, is defined as the total resistance of its minimum resistance tree, $R(\mathcal{T}^*_x)$. The following result characterizes the stochastically stable states through their stochastic potentials.

\begin{lem} \cite{Young93}
\label{stochrem}
Let $P_\epsilon$ be a regular perturbed Markov chain. Any $x \in \mathcal{X}$ is stochastically stable if and only if $x$ is a recurrent state of the unperturbed chain, $P_0$, with the minimum stochastic potential.
\end{lem}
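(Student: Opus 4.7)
The plan is to invoke the Markov chain tree theorem together with a careful asymptotic analysis in $\epsilon$. Since $P_\epsilon$ is irreducible and aperiodic for $\epsilon>0$, its unique stationary distribution admits the Kirchhoff-type representation
$$\mu^*_\epsilon(x) = \frac{\sum_{\mathcal{T}_x} \prod_{(y \to z) \in \mathcal{T}_x} P_\epsilon(y,z)}{\sum_{x' \in \mathcal{X}}\sum_{\mathcal{T}_{x'}} \prod_{(y \to z) \in \mathcal{T}_{x'}} P_\epsilon(y,z)},$$
where each inner sum ranges over all spanning in-trees rooted at the indicated state. This gives an explicit handle on $\mu^*_\epsilon$ in terms of combinatorial objects weighted by edge probabilities.

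Next, I would exploit the resistance definition in (\ref{resdef}), which ensures $P_\epsilon(y,z) = \Theta(\epsilon^{R(y,z)})$ along every feasible edge as $\epsilon \to 0^+$. Consequently each tree product obeys $\prod_{(y \to z) \in \mathcal{T}} P_\epsilon(y,z) = \Theta(\epsilon^{R(\mathcal{T})})$, and the numerator behaves like $\Theta(\epsilon^{R(\mathcal{T}^*_x)})$, dominated by the minimum-resistance tree rooted at $x$. A parallel estimate controls the denominator, whose leading order is $\epsilon^{R^*}$ with $R^* = \min_{x' \in \mathcal{X}} R(\mathcal{T}^*_{x'})$. Dividing and letting $\epsilon \to 0^+$ yields
$$\lim_{\epsilon \to 0^+} \mu^*_\epsilon(x) > 0 \quad \text{if and only if} \quad R(\mathcal{T}^*_x) = R^*.$$

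Finally, I would connect the minimum stochastic potential to recurrence in $P_0$. Observe that edges with $R(y,z)=0$ are exactly the positive-probability transitions of $P_0$. If $x$ is transient in $P_0$, then every spanning in-tree rooted at $x$ must contain at least one edge leaving some recurrent class of $P_0$, which costs strictly positive resistance. Performing a tree surgery that reroutes the root to a state inside that class and redirects the former in-class edges using only zero-resistance moves produces a strictly cheaper tree rooted at a recurrent state. Hence no transient state attains $R^*$, giving one direction; the converse follows immediately from the equivalence just established.

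I expect the tree-surgery step in the last paragraph to be the main obstacle, since the argument must be executed carefully when $P_0$ has several recurrent classes and when edges of a minimum tree simultaneously serve multiple structural roles. Aperiodicity and irreducibility of $P_\epsilon$ must also be invoked to guarantee that every pair of states is joined by a path of finite-resistance edges, so that the trees appearing in the numerator and denominator are genuinely feasible and the $\Theta$-asymptotics are sharp rather than degenerate.
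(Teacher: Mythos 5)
Note first that the paper offers no proof of Lemma~\ref{stochrem}: it is quoted verbatim from \cite{Young93}, so there is no in-paper argument to compare against. Your sketch is the standard Freidlin--Wentzell/Markov-chain-tree-theorem proof of that result, and its outline is correct: irreducibility of $P_\epsilon$ gives the Kirchhoff representation of $\mu^*_\epsilon$, condition (\ref{resdef}) turns each tree weight into $\Theta(\epsilon^{R(\mathcal{T})})$ with a genuine positive finite limit after normalization, and the ratio is therefore bounded away from zero exactly when $R(\mathcal{T}^*_x)$ attains the global minimum $R^*$. One stylistic difference from \cite{Young93} is that Young carries out the tree analysis on a reduced chain whose nodes are the recurrent classes of $P_0$, whereas you work with spanning in-trees over all of $\mathcal{X}$, which matches the definition of $\mathcal{T}_x$ adopted in this paper; the two are equivalent here.

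The one step you rightly flag as delicate --- the tree surgery showing that no transient state of $P_0$ attains $R^*$ --- is also the one place where your construction, as described, only yields a tree of resistance \emph{at most} $R(\mathcal{T}^*_x)$ rather than strictly smaller: rerouting the root along a zero-resistance path from $x$ into a recurrent class deletes the old out-edge of the new root, but that edge may itself have zero resistance, so strictness does not follow automatically and one must additionally excise the positive-resistance edge that exits the recurrent class. A cleaner way to dispose of transient states entirely avoids surgery: every limit point of the family $\{\mu^*_\epsilon\}$ as $\epsilon \rightarrow 0^+$ is a stationary distribution of $P_0$ (pass to the limit in $\mu^*_\epsilon P_\epsilon = \mu^*_\epsilon$ using compactness of the simplex and $P_\epsilon \rightarrow P_0$), and every stationary distribution of $P_0$ vanishes on transient states; combined with your tree-theorem equivalence this shows no transient state can satisfy $R(\mathcal{T}^*_x) = R^*$, and the lemma follows. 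With that substitution (or with the extra excision step made explicit), your argument is complete.
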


\section{Coverage Maximization}
\label{covmax}
 In this section, we will briefly show that if all the agents follow BLLL in a repetitive play of $\Gamma_{\text{DGC}}$, then the stochastically stable states are the coverage maximizers. A more detailed presentation of this approach can be found in \cite{Yasin13NecSys}. As stated earlier, this solution requires some local communications among the agents. In the next section, we will present a communication-free learning algorithm that can achieve the same limiting behavior as this method. 
\begin{center}
\resizebox{8.6cm}{!}{
\begin{alg}{|l |}

\hline
\label{BLLLAlg}
\textbf{Algorithm \Roman{algno}:} Binary Log-linear Learning (\cite{Marden12})\\
\hline
\mbox{\small $\;1:\;$}\textbf{initialization:} $\epsilon \in \Re^+$ small, $a\in A$ arbitrary \\
\mbox{\small $\;2:\;$}\textbf{repeat} \\

\mbox{\small $\;3:\;$}\hspace{0.45cm} Pick a random $i \in I$, and a random $a_i' \in A^c_i(a_i)$. \\
\mbox{\small $\;4:\;$}\hspace{0.45cm} Compute $\alpha=\epsilon^{-U_i(a(t))}$,  $\beta=\epsilon^{-U_i(a_i',a_{-i}(t))}$. \\
\mbox{\small $\;5:\;$}\hspace{0.45cm} With probability $\frac{\beta}{\alpha+\beta} $, set $a_i = a_i'$.  \\
\mbox{\small $6:\;$}\textbf{end repeat}\\
\hline
\end{alg}}
\end{center}

In BLLL, a single agent is randomly chosen at each time step. The selection of a single agent at each time step can be achieved (with a very high probability) without a centralized coordination by using methods such as the asynchronous time model proposed in \cite{Boyd06}. The selected agent, assuming that all the other agents are stationary, updates its action depending on its current utility and the hypothetical utility it would receive by playing a random action in its constrained action set. This is illustrated in Fig. \ref{BLLLf}.


\begin{figure}[h!]
\begin{center}
\includegraphics[trim =0mm 0mm 0mm 0mm, clip,scale=0.8]{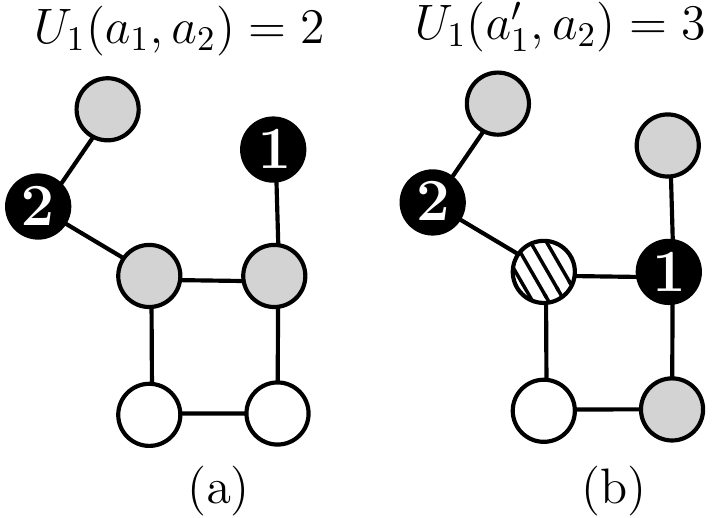}
\caption{An illustration of the BLLL algorithm, where two agents with $\delta=1$ are located as in (a) and Agent 1 is updating its action. Agent 1 randomly picks a candidate action, $a_1' \in A_i^c(a_i)$, as in (b). Its next action is picked from $\{a_1,a_1'\}$ with probabilities depending on the corresponding utilities. For the configuration in (b), the tiled node is not providing any utility to either of the agents since it is covered by both of them. }
\label{BLLLf}
\end{center}
\end{figure}

In \cite{Marden12}, it was shown that BLLL can be used to achieve potential maximization if the constrained action sets satisfy Properties \ref{proper1} and \ref{proper2} provided below.

\begin{property}(Reachability) For any agent $i \in I$ and any action pair $a_i^0, a_i^k \in A_i$, there exists a sequence of actions $\{a_i^0,a_i^1, \hdots, a_i^k\}$ such that $a_i^r \in A^c_i(a_i^{r-1})$ for all $r \in \{1,2, \hdots, k\}$. 
\label{proper1} 
\end{property}
\vskip3ex
\begin{property}\label{proper2} (Reversability) For any agent $i \in I$ and any action pair $a_i, a_i' \in A_i$,
$$
a_i' \in A^c_i(a_i) \Leftrightarrow a_i \in A^c_i(a_i').
$$
\end{property}

\begin{theorem}
\label{BLLLworks} \cite{Marden12}
Consider any finite potential game and constrained
action sets satisfying Properties \ref{proper1} and \ref{proper2}. If all players adhere to BLLL,
then the stochastically stable states are the set of potential maximizers.
\end{theorem}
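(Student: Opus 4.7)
The plan is to verify the three hypotheses of Lemma~\ref{stochrem} for the chain induced by BLLL and then to identify its recurrent states of minimum stochastic potential. First, I would check that BLLL induces a regular perturbed Markov chain $P_\epsilon$ on $A$: irreducibility comes from Property~\ref{proper1} together with the positive probability of each agent choice and each draw from $A_i^c(a_i)$, while aperiodicity comes from positive self-loop probabilities at every state. For the resistance computation, I would rewrite the acceptance probability as $\epsilon^{M - U_i(a_i', a_{-i})}/(\epsilon^{M - U_i(a)} + \epsilon^{M - U_i(a_i', a_{-i})})$ with $M = \max\{U_i(a), U_i(a_i', a_{-i})\}$ and invoke the potential-game identity $U_i(a_i', a_{-i}) - U_i(a) = \phi(a_i', a_{-i}) - \phi(a)$ to obtain $R(a \to (a_i', a_{-i})) = \max\{0, \phi(a) - \phi(a_i', a_{-i})\}$ for every single-agent feasible transition.

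Next, I would observe that every potential maximizer $a^*$ is absorbing for $P_0$, since any feasible deviation has nonnegative resistance and strictly positive resistance forces the unperturbed transition to vanish; hence $a^*$ is recurrent for $P_0$. By Lemma~\ref{stochrem}, the theorem then reduces to showing that the potential maximizers are exactly the states with minimum stochastic potential.

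The core of the argument is a tree-surgery step. Fix any non-maximizer $a$ with a minimum resistance tree $\mathcal{T}_a^*$ and any potential maximizer $a^*$, and let $a^* = b^0 \to b^1 \to \cdots \to b^k = a$ be the unique $\mathcal{T}_a^*$-path. By Property~\ref{proper2} every edge on this path has a feasible reverse, so I would construct $\mathcal{T}_{a^*}$ by replacing each path edge $b^r \to b^{r+1}$ with its reverse $b^{r+1} \to b^r$ and keeping all other edges of $\mathcal{T}_a^*$. One then checks that the result is a spanning tree rooted at $a^*$: no cycle arises because every off-path node keeps its original outgoing edge, and those outgoing paths must eventually merge into the reversed spine leading to $a^*$. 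Applying the identity $\max\{0,x\} - \max\{0,-x\} = x$ to each reversed edge yields the telescoping bound
\begin{equation*}
R(\mathcal{T}_{a^*}) - R(\mathcal{T}_a^*) = \sum_{r=0}^{k-1}\bigl[\phi(b^{r+1}) - \phi(b^r)\bigr] = \phi(a) - \phi(a^*) < 0,
\end{equation*}
contradicting minimality of $\mathcal{T}_a^*$. The same edge-reversal argument applied between any two potential maximizers shows that they share a common stochastic potential, and Lemma~\ref{stochrem} then identifies them as the stochastically stable states.

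The hard part will be the tree-surgery step: the fact that reversing the spine of a resistance tree yields another tree (rather than merely a spanning subgraph with cycles) uses Property~\ref{proper2} in an essential way, and the telescoping collapse of the path resistances into $\phi(a) - \phi(a^*)$ is what produces the required sign. Everything else is a routine check of the definitions.
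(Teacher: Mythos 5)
The paper does not prove this theorem; it is imported verbatim from \cite{Marden12}, so there is no in-paper proof to compare against. Your argument is the standard (and correct) resistance-tree proof from that reference and from Young's framework, and it is structurally the same path-reversal surgery that this paper itself performs later for the CFCM chain (compare your telescoping identity with the computation in the proof of Lemma~\ref{maxpot}): the resistance $\max\{0,\phi(a)-\phi(a_i',a_{-i})\}$ of a single-agent transition, the reversibility of tree edges guaranteed by Property~\ref{proper2}, and the collapse of the reversed-spine resistance difference to $\phi(a)-\phi(a^*)$ are all exactly right. One small imprecision: a potential maximizer $a^*$ need not be \emph{absorbing} for $P_0$, since a feasible deviation to another profile of equal potential has zero resistance and is accepted with limiting probability $1/2$; what you actually need (and what your subsequent reasoning uses) is only that the set of potential maximizers is closed under $P_0$ and, by Property~\ref{proper2}, forms recurrent communicating classes, so each maximizer is recurrent. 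With that wording fixed, the proof is complete.
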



In light of Theorem \ref{BLLLworks}, the agents can maximize the coverage by following the BLLL algorithm in a repetitive play of $\Gamma_{\text{DGC}}$, if the constrained action sets given in (\ref{constrA}) satisfy Properties \ref{proper1} and \ref{proper2}. Lemma \ref{Propsatis} shows that the constrained action sets indeed satisfy these properties if the graph to be covered is connected. 

\begin{lem}
\label{Propsatis}
The constrained action sets in (\ref{constrA}) satisfy Properties \ref{proper1} and \ref{proper2} if the graph $\mathcal{G}=(V,E)$ is connected. 
\end{lem}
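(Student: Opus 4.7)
The plan is to verify each property directly from the definition $A^c_i(a_i)=\mathcal{N}_{a_i}$, recalling that this closed neighborhood contains $a_i$ itself together with all nodes adjacent to $a_i$ in $\mathcal{G}=(V,E)$. In essence, $a_i'\in A^c_i(a_i)$ means ``$a_i'$ equals $a_i$ or is adjacent to $a_i$,'' so a sequence of actions $\{a_i^0,a_i^1,\ldots,a_i^k\}$ with $a_i^r\in A^c_i(a_i^{r-1})$ is precisely a walk in $\mathcal{G}$ (with optional self-loops).

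For Property~\ref{proper1} (Reachability), I would invoke the assumption that $\mathcal{G}$ is connected. By definition, for any two nodes $a_i^0,a_i^k\in V$ the distance $d(a_i^0,a_i^k)$ is finite, so there exists a shortest path $a_i^0,a_i^1,\ldots,a_i^k$ in $\mathcal{G}$ whose consecutive vertices are adjacent. Consecutive adjacency gives $a_i^r\in \mathcal{N}_{a_i^{r-1}}=A^c_i(a_i^{r-1})$ for every $r\in\{1,\ldots,k\}$, which is exactly the reachability condition. (If $a_i^0=a_i^k$ the trivial length-zero sequence works.)

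For Property~\ref{proper2} (Reversibility), I would appeal to the fact that $\mathcal{G}$ is undirected. If $a_i'\in A^c_i(a_i)=\mathcal{N}_{a_i}$, then either $a_i'=a_i$, in which case $a_i\in\mathcal{N}_{a_i'}$ trivially, or the edge $(a_i,a_i')\in E$. Since $E$ consists of unordered pairs, $(a_i',a_i)\in E$ as well, so $a_i\in\mathcal{N}_{a_i'}=A^c_i(a_i')$. The reverse implication is identical by symmetry of the argument.

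Since both directions are essentially immediate from the definitions together with the connectivity and undirectedness of $\mathcal{G}$, there is no real obstacle here; the only thing to be careful about is remembering that $\mathcal{N}_{a_i}$ denotes the \emph{closed} neighborhood, so that the ``stay put'' action $a_i'=a_i$ is always available and does not require special handling in either property.
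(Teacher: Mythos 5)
Your proof is correct and follows essentially the same route as the paper: connectivity supplies a finite path whose consecutive vertices certify reachability, and the symmetry of adjacency in an undirected graph gives reversibility. Your version simply spells out the closed-neighborhood details that the paper leaves implicit.
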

\begin{proof}
If the graph is connected, then there exists a finite-length path $\{v^0, \hdots, v^k\}$ between any pair of nodes $v^0,v^k \in V$, and Property \ref{proper1} is satisfied. Furthermore, for undirected graphs, $d(v,v')=d(v',v)$. Hence, Property \ref{proper2} is also satisfied.
\end{proof}

\begin{theorem}
\label{Solved}
Let  $\mathcal{G}=(V,E)$ be connected graph, and let all agents follow BLLL in a repetitive play of $\Gamma_{\emph{DGC}}$ with the constrained action sets in (\ref{constrA}). Then the stochastically stable states are the maximizers of $|V_c(a)|$.

\end{theorem}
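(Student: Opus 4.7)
The plan is to prove this theorem as a direct corollary by chaining together the three preceding results: Lemma \ref{potgam}, Lemma \ref{Propsatis}, and Theorem \ref{BLLLworks}. No new machinery is required; the task is simply to check that all the hypotheses of Theorem \ref{BLLLworks} are met in the setting of $\Gamma_{\text{DGC}}$.

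First, I would invoke Lemma \ref{potgam} to assert that $\Gamma_{\text{DGC}}$ is a potential game whose potential function is exactly $\phi(a) = |V_c(a)|$, the global coverage objective. Next, since $A_i = V$ for every agent and $V$ is the node set of a (finite) connected graph, the action space $A = V^m$ is finite, so $\Gamma_{\text{DGC}}$ is a finite potential game. Then I would apply Lemma \ref{Propsatis}, which tells us that under the connectedness assumption on $\mathcal{G}$, the constrained action sets $A_i^c(a_i) = \mathcal{N}_{a_i}$ in (\ref{constrA}) satisfy the Reachability (Property \ref{proper1}) and Reversibility (Property \ref{proper2}) conditions.

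With the finite potential game structure and the two properties on the constrained action sets both verified, the hypotheses of Theorem \ref{BLLLworks} are satisfied. Applying that theorem to $\Gamma_{\text{DGC}}$ under BLLL yields that the stochastically stable states are exactly the maximizers of the potential function $\phi(a)$. Since $\phi(a) = |V_c(a)|$ by Lemma \ref{potgam}, these stochastically stable states are precisely the action profiles that maximize $|V_c(a)|$, which is what we wished to show.

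Because every hypothesis has already been packaged into one of the cited lemmas or theorems, there is no substantive obstacle in the argument. The only point that requires a brief remark is confirming finiteness of the game (which follows immediately from $V$ being finite) so that Theorem \ref{BLLLworks} applies verbatim.
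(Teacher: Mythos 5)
Your proposal is correct and follows essentially the same route as the paper: verify Properties \ref{proper1} and \ref{proper2} via Lemma \ref{Propsatis}, invoke Theorem \ref{BLLLworks} for the finite potential game $\Gamma_{\text{DGC}}$ established in Lemma \ref{potgam}, and conclude using $\phi(a)=|V_c(a)|$. Your added remark on the finiteness of the action space is a small point the paper leaves implicit but does not change the argument.
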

\begin{proof}
If  $\mathcal{G}=(V,E)$ is connected, then the constrained action sets in (\ref{constrA}) satisfy Properties \ref{proper1} and \ref{proper2} due to Lemma \ref{Propsatis}. Hence, in light of Theorem \ref{BLLLworks}, if all agents follow BLLL in a repetitive play of $\Gamma_{\text{DGC}}$, the stochastically stable states are the potential maximizers. Due to (\ref{potf}), those are the configurations maximizing the number of covered nodes, $|V_c(a)|$. 
\end{proof}



\section{Communication-free coverage Maximization}
\label{sol}

In the DGC problem, the sensory inputs do not reveal which of the nodes within the sensing range of an agent is covered also by some other agents. However, each agent can sense if any other agent is also covering its current position as illustrated in Fig. \ref{comfree_cov}. Hence, each agent $i$ observes the partial utility, $u_i(a_i, a_{-i})$ in (\ref{smallu}), via its sensory input.



\begin{figure}[h!]
\begin{center}
\includegraphics[trim =0mm 0mm 0mm 0mm, clip,scale=0.52]{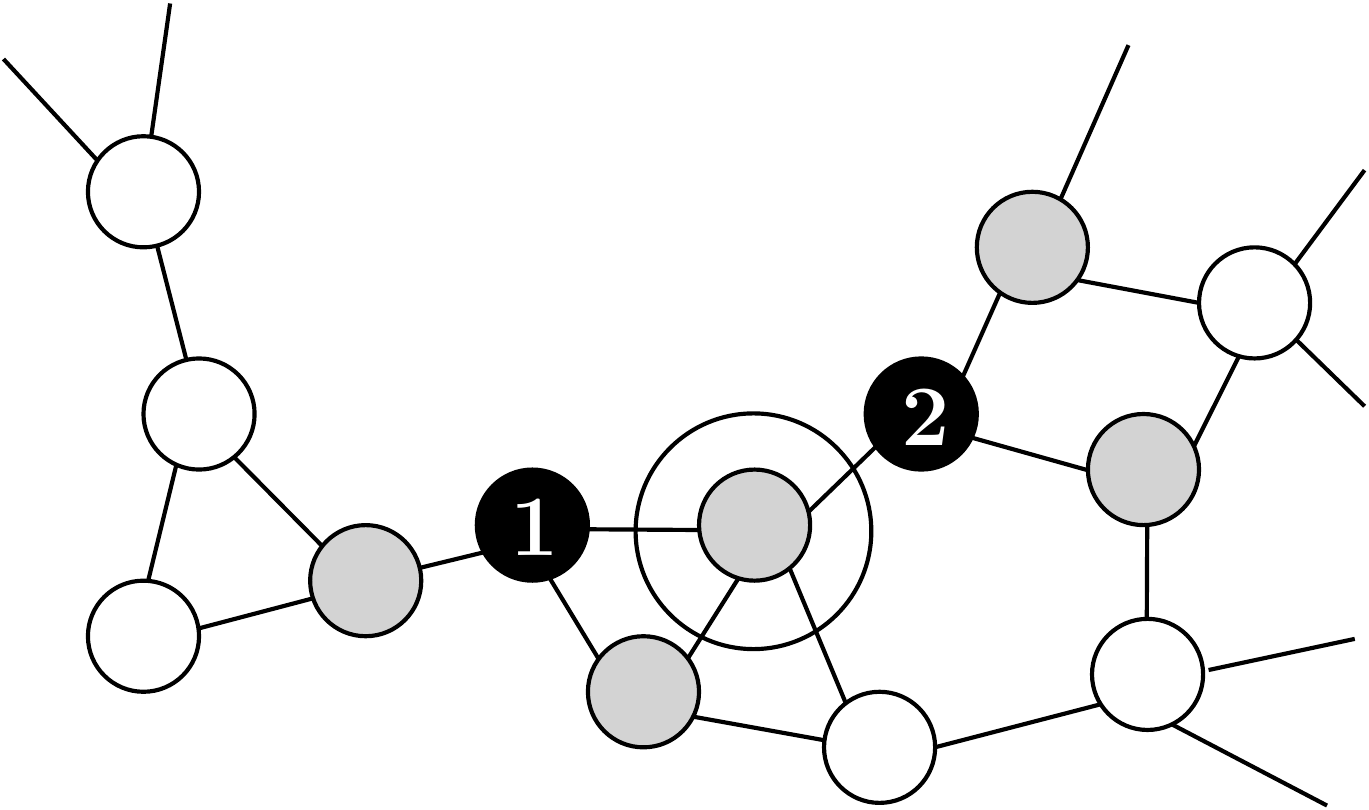}
\caption{ Distributed graph coverage by agents with sensing ranges $\delta=1$. Agents 1 and 2 do not observe that the encircled node is covered by both of them. However, each of them knows that its current position is covered only by itself since no other agent is within its sensing range.  
}
\label{comfree_cov}
\end{center}
\end{figure}

Since the exact utilities in $\Gamma_{\text{DGC}}$ are not measurable without explicit communications, the agents need to update their actions based on some estimated utilities. Assuming that the nearby agents will remain stationary for a sufficient amount of time, each agent $i$ can construct an estimated utility by visiting each $v\in \mathcal{N}_{a_i}^{\delta}$ and combining the sampled  $u_i(v, a_{-i})$. Note that the resulting estimation will not necessarily be equal to the actual utility since multiple agents may be moving simultaneously as illustrated in Fig. \ref{util_est}. However, if the probability of having simultaneously moving agents is sufficiently small, then false estimations will be sufficiently rare for the agents to achieve the desired limiting behavior by following a noisy best-response based on the estimated utilities. The proposed communication-free algorithm is based on this approach.




\begin{figure}[htb]
\centering
\includegraphics[trim =0mm 0mm 0mm 0mm, clip,scale=0.5]{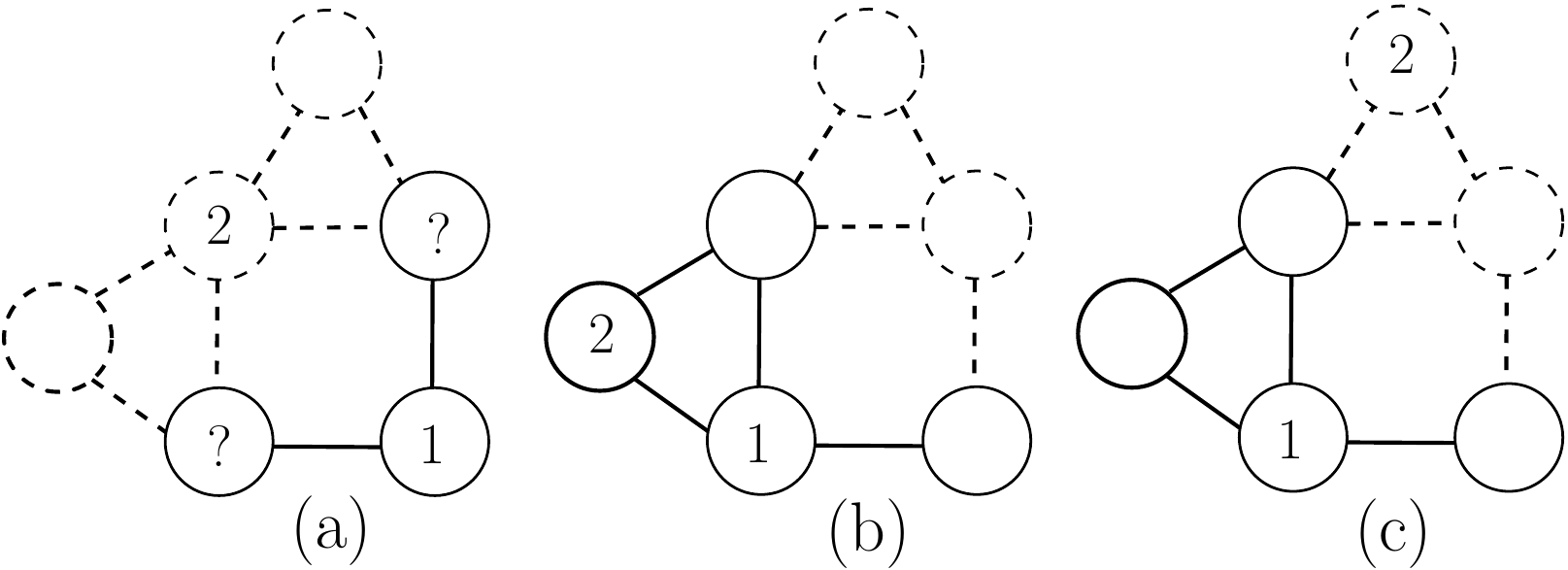}
\caption{Two agents with sensing ranges $\delta=1$ are located on a graph as in (a). Part of the graph that is not sensed by agent 1 is dashed in the figures.  Agent 1 can estimate its utility in (a) by sampling the partial utilities from the nodes in its sensing range. If agent 2 is stationary in the meantime, then the resulting estimation will be true. However, if agent 2 is also moving, then the sampled partial utilities may be true as in (b) or false as in (c). }
\label{util_est}
\end{figure}
 
In the remainder of this section, we present the proposed communication-free coverage maximization algorithm (CFCM) and an analysis of the corresponding dynamics. 
\subsection{CFCM Algorithm} 
The proposed algorithm has two parameters to be set. The first parameter, $\epsilon \in \Re^+$, is the noise in the agent decisions when choosing between the candidate actions based on the corresponding estimated utilities. The second parameter, $r \in \Re^+$, sets the likelihood of each agent to update its action. As it will be shown later in this section, the desired global behavior emerges when $r$ is sufficiently large and $\epsilon$ is small. 





In CFCM, each agent $i$ is either stationary or experimenting. Each stationary agent repeats its current action in the next time step with a high probability, $1-\epsilon^r$, or starts an experiment with probability $\epsilon^r$. An experiment involves comparing its current action, $a_i^1$, to an alternative randomly picked from its constrained action set, $a_i^2 \in  A^c_i(a_i^1) $, where $ A^c_i(a_i^1)$ is the local neighborhood of $a_i^1$ as given in (\ref{constrA}). In this aspect, the agent behavior is similar to the payoff-based BLLL in \cite{Marden12}. However, since the agents receive only some partial utilities, $u_i(a_i, a_{-i})$, an experiment consists of visiting all the nodes in $\mathcal{N}^{\delta}_{a_i^1} \cup \mathcal{N}^{\delta}_{a_i^2}$ to see which of those nodes are also covered by some other agents. We refer to the corresponding path to be traversed as an experiment path between $a_i^1$ and $a_i^2$. 





\begin{definition} \emph{(Experiment Path)}: Let $\delta$ be the sensing range of the agents. For any $a_i^1$ and $a_i^2 \in A^c_i(a_i^1)$, a finite path, $\{a_i^1, \hdots, a_i^2\}$, is an experiment path if it traverses $\mathcal{N}^{\delta}_{a_i^1} \cup \mathcal{N}^{\delta}_{a_i^2}$.
\end{definition}
For any $a_i^1$ and $a_i^2 \in A^c_i(a_i^1)$, an experiment path can be obtained locally by utilizing methods such as depth-first search or breadth-first search (e.g., \cite{Tarjan72}). In the CFCM algorithm, an experiment path between $a_i^1$ and $a_i^2$ is denoted as $\mathcal{E}(a_i^1, a_i^2)$. During an experiment, the agent traverses its experiment path to construct the estimated utilities, $\hat{U}_i^1$ and $\hat{U}_i^2$, from the sampled partial utilities. For simplicity, a partial utility from a node is sampled only at the last visit to that node during the experiment. As such, if it is the agent's last visit of the current position, $a_i$, and the agent does not sense any other agent within $\delta$, then the utility estimations corresponding to the candidate actions within $\delta$ from $a_i$ are incremented by 1. Once the experiment path is traversed, the agent randomly chooses between the two candidate actions based on the estimated utilities, $\hat{U}_i^1$ and $\hat{U}_i^2$. At the next time step, the agent becomes stationary at its chosen action until it starts a new experiment.

For the CFCM algorithm, the state of any agent $i$ can be defined as 
\begin{equation}
\label{statei}
x_i=[\begin{array}{llll}  S_i & k_i & \hat{U}_i^1 &\hat{U}_i^2\end{array}],
\end{equation} 
where $S_i$ is a sequence of actions, which is either a singleton (stationary) or an experiment path (experimenting), $k_i \in \{1, \hdots, |S_i|\}$ is an index variable denoting which action in $S_i$ is currently taken by the agent, and $\hat{U}_i^1, \hat{U}_i^2$ are the estimations for $U_i(a_i^1, a_{-i})$ and $U_i(a_i^2, a_{-i})$, respectively. In this representation, the current action, $a_i$, and the candidate actions, $a_i^1$ and $a_i^2$, are given as
\begin{equation}
\label{candact}
a_i=S_i(k_i),\quad
a_i^1=S_i(1), \quad
a_i^2=S_i(|S_i|),
\end{equation}
where $S_i(k_i)$ denotes the $k_i^{th}$ element in $S_i$, and $|S_i|$ denotes the length of $S_i$. 

\vskip2ex
\begin{center}
\resizebox{8.9cm}{!}{
\begin{alg}{|l |}
\hline
\label{CFCMAlg}
\textbf{Algorithm \Roman{algno}:} Communication-free Coverage Maximization (CFCM)\\
\hline
\mbox{\small $\;1:\;$}\textbf{initialization:} $\epsilon \in \Re^+$ (small), $r \in \Re^+$, $a_i\in A_i$ arbitrary, \\
\hspace{0.35cm}   $S_i=\{a_i\}$, $k_i=1$, $\hat{U}_i^1= \hat{U}_i^2= 0$. \\
\mbox{\small $\;2:\;$}\textbf{repeat} \\
\mbox{\small $\;3:\;$}\hspace{0.45cm} $a_i=S_i(k_i)$, \; $a_i^1 = S_i(1)$,   \;$a_i^2=S_i(|S_i|)$.\\
\mbox{\small $\;4:\;$}\hspace{0.45cm} $\textbf{if}\hspace{0.1cm}  $($|S_i|=1$)\\
\mbox{\small $\;5:\;$}\hspace{0.9cm} Generate a random (uniform) $\gamma \in [0,1]$.\\
\mbox{\small $\;6:\;$}\hspace{0.9cm} $\textbf{if}\hspace{0.1cm}  $($\gamma \leq \epsilon^r$)\\
\mbox{\small $\;7:\;$}\hspace{1.35cm} $a_i^2$ is randomly (uniform) chosen over $ A_i^c(a_i^1)$.\\
\mbox{\small $\;8:\;$}\hspace{1.35cm} $S_i=\mathcal{E}(a_i^1,a^2_i)$.\\
\mbox{\small $\;9:\;$}\hspace{0.9cm} $\textbf{end if}$\\
\mbox{\small $10:\;$}\hspace{0.45cm} $\textbf{else}$\\
\mbox{\small $11:\;$}\hspace{0.9cm} $\textbf{if}\hspace{0.1cm}  $$(k_i \geq k, \; \forall k \in \{ k \mid S_i(k) = a_i \})$\\
\mbox{\small $12:\;$}\hspace{1.35cm} $\hat{U}^1_i= \hat{U}^1_i+u_i(a_i,a_{-i})$, if $a_i \in \mathcal{N}_{a^1_i}^\delta$.\\
\mbox{\small $13:\;$}\hspace{1.35cm} $\hat{U}^2_i= \hat{U}^2_i+u_i(a_i,a_{-i})$, if $a_i\in \mathcal{N}_{a^2_i}^\delta$.\\
\mbox{\small $14:\;$}\hspace{0.9cm} \textbf{end if}\\
\mbox{\small $15:\;$}\hspace{0.9cm} $\textbf{if}\hspace{0.1cm}  ( k_i = |S_i|)$\\
\mbox{\small $16:\;$}\hspace{1.35cm} $\alpha=\epsilon^{-\hat{U}_i^1}$, $\beta=\epsilon^{-\hat{U}_i^2}$. \\
\mbox{\small $17:\;$}\hspace{1.35cm} $S_i=\left\{\begin{array}{ll} \{a_i^1\}&\mbox{ w.p. $\frac{\alpha}{\alpha+\beta} $}, \\ \{a_i^2\}&\mbox{ otherwise. }\end{array}\right.$\\
\mbox{\small $18:\;$}\hspace{1.35cm} $k_i=1$, $\hat{U}_i^1=\hat{U}_i^2=0$.\\

\mbox{\small $19:\;$}\hspace{0.9cm} \textbf{else}\\
\mbox{\small $20:\;$}\hspace{1.35cm} $k_i = k_i+1$.\\
\mbox{\small $21:\;$}\hspace{0.9cm} \textbf{end if}\\
\mbox{\small $22:\;$}\hspace{0.45cm} \textbf{end if}\\
\mbox{\small $23:\;$}\textbf{end repeat}\\
\hline
\end{alg}
}
\end{center}
\vskip2ex

The CFCM algorithm is memoryless since the state of every agent in the next time step is independent of its past trajectory. As such, if all agents follow the CFCM algorithm, then a Markov chain is induced over the state space, $\mathcal{X}$, where each $x \in \mathcal{X}$ is the global state obtained by concatenating the states of all agents, i.e.

\begin{equation}
\label{cfstate}
x = [ x_1, x_2, \hdots, x_m].
\end{equation}

In the remainder of this section, the limiting behavior of the resulting Markov chain will be inspected through a stochastic stability analysis.


\subsection{Limiting Behavior} 
%
%
%

For any $x \in \mathcal{X}$, the agents can be grouped into two distinct sets consisting of the stationary agents, $I_s(x)$, and the experimenting agents, $I_e(x)$, as
\begin{equation}
I_s(x)= \{i \in I \mid |S_i|=1\},
\end{equation}
\begin{equation}\label{Ie}
I_e(x)= I\setminus I_s(x).
\end{equation}
Using these sets, for any feasible transition, $x \rightarrow x^+$, the agents can be grouped into 4 disjoint sets based on the transition of their individual states:
\begin{equation}
I_{ss}(x,x^+)= I_s(x) \cap I_s(x^+),
\end{equation}
\begin{equation}
I_{se}(x,x^+)=  I_s(x) \cap I_e(x^+),
\end{equation}
\begin{equation}
I_{ee}(x,x^+)=  I_e(x) \cap I_e(x^+),
\end{equation}\begin{equation}
I_{es}(x,x^+)= I_e(x) \cap I_s(x^+),
\end{equation}
where $I_{ss}(x,x^+)$ are the agents that remain stationary,  $I_{se}(x,x^+)$ are the ones starting to experiment, $I_{e}(x)$ are  the experimenting agents that have not completed moving along their experiment paths, and $I_{es}(x,x^+)$ are the agents that have completed traversing their experiment paths and choose between their candidate actions. 


The agents in $I_{es}(x,x^+)$ can be further partitioned as the ones choosing their first candidate action and the ones that choose their second candidate action, i.e.
\begin{equation}\label{Id1}
I_{es}^1(x,x^+)= \{i \in I_{es}(x,x^+) \mid a_i^{+}=a_i^1  \},
\end{equation}
\begin{equation}\label{Id1}
I_{es}^2(x,x^+)= \{i \in I_{es}(x,x^+) \mid a_i^{+}=a_i^2  \}.
\end{equation}

Note that the agents in $I_{es}(x,x^+)$ do not necessarily choose the action resulting in the higher estimated utility. For each $i \in I$, let $\hat{U}_i^*$=max\{$\hat{U}_i^1,\hat{U}_i^2$\}. Then, the amount of estimated utility that is denied in the transition $x \rightarrow x^+$ is given as
\begin{equation}
\label{denied}
\Delta_i(x_i,x_i^+)=  \left\{\begin{array}{ll} \hat{U}_i^*-\hat{U}_i^1&\mbox{if $i \in I_{es}^1(x,x^+)$}, \\ \hat{U}_i^*-\hat{U}_i^2&\mbox{if $i \in I_{es}^2(x,x^+) $}, \\ 0&\mbox{otherwise. }\end{array}\right.
\end{equation}

Next, we show that the CFCM algorithm induces a regular perturbed Markov chain, where the resistance of any feasible transition depends on the estimated utilities denied by the agents becoming stationary and the number of agents starting new experiments. 
 
 

\begin{lem}
\label{RPMC}
Let  $\mathcal{G}=(V,E)$ be connected graph. If all agents employ the CFCM algorithm, then a regular perturbed Markov chain is induced over $\mathcal{X}$, and the resistance of any feasible transition, $x \rightarrow  x^+$, is  
\begin{equation}
\label{resist}
R(x, x^+) = r|I_{se}(x, x^+)| + \sum_{i \in I_{es}(x,x^+)} \Delta_i(x_i, x^+_i).
\end{equation}
\end{lem}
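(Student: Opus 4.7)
The plan is to verify the three conditions defining a regular perturbed Markov chain and to extract the resistance by factoring the one-step transition probability over agents. First I would confirm that $x=[x_1,\ldots,x_m]$ evolves as a Markov chain on a finite state space $\mathcal{X}$, since each $x_i^+$ is determined from $x$ and fresh randomness that is independent across agents, and each component of $x_i$ takes values in a finite set ($S_i$ is bounded by the diameter of $\mathcal{G}$, $k_i\leq|S_i|$, and the estimated utilities are bounded nonnegative integers). For $\epsilon>0$, aperiodicity is immediate from the self-loop at any fully stationary configuration, which occurs with probability $(1-\epsilon^r)^m>0$. Irreducibility follows from connectedness of $\mathcal{G}$ via Lemma \ref{Propsatis} (Property \ref{proper1}): any action profile can be reached by a sequence of length-one experiments that succeed with positive probability, and every mid-experiment state is deterministically flushed back to a stationary state after at most $|S_i|$ steps.

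Next I would exploit the conditional independence of per-agent updates to write
\[
P_\epsilon(x,x^+) \;=\; \prod_{i\in I} p_i(x_i\to x_i^+\mid x),
\]
so that whenever each factor admits a resistance $R_i(x_i,x_i^+)$, the joint resistance satisfies $R(x,x^+)=\sum_i R_i(x_i,x_i^+)$. The problem then reduces to analyzing each factor according to the four disjoint categories $I_{ss},I_{se},I_{ee},I_{es}$. For $i\in I_{ss}(x,x^+)$ the agent is stationary and does not initiate an experiment, so $p_i=1-\epsilon^r$ and $R_i=0$. For $i\in I_{se}(x,x^+)$ the agent draws $\gamma\leq\epsilon^r$ and then uniformly selects $a_i^2\in A_i^c(a_i^1)$, giving $p_i=\epsilon^r/|A_i^c(a_i^1)|$ and hence $R_i=r$. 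For $i\in I_{ee}(x,x^+)$ the agent advances along $S_i$: the index $k_i$ is incremented, and the estimated-utility updates in steps 12--13 are completely determined by $x$ (they depend only on $u_i(a_i,a_{-i})$, which is fixed by the current action profile), so $p_i=1$ and $R_i=0$. For $i\in I_{es}(x,x^+)$ the log-linear rule in step 17 assigns $a_i^+=a_i^1$ probability $\epsilon^{-\hat U_i^1}/(\epsilon^{-\hat U_i^1}+\epsilon^{-\hat U_i^2})$, which after multiplying by $\epsilon^{\hat U_i^*}/\epsilon^{\hat U_i^*}$ equals
\[
\frac{\epsilon^{\hat U_i^*-\hat U_i^1}}{\epsilon^{\hat U_i^*-\hat U_i^1}+\epsilon^{\hat U_i^*-\hat U_i^2}}.
\]
Since one of the exponents in the denominator is zero, the leading order is $\epsilon^{\hat U_i^*-\hat U_i^1}=\epsilon^{\Delta_i}$; the symmetric computation handles $a_i^+=a_i^2$. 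In both cases $R_i=\Delta_i(x_i,x_i^+)$ as defined in (\ref{denied}).

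Summing these contributions over $i$ yields (\ref{resist}), and the same product expression shows that
\[
\lim_{\epsilon\to 0^+}\frac{P_\epsilon(x,x^+)}{\epsilon^{R(x,x^+)}}
\]
exists and is a strictly positive finite constant (the product of the leading coefficients of the individual factors), establishing condition (3) of the definition. Condition (2) is obtained by sending $\epsilon\to 0$ in the explicit product. The main obstacle I anticipate is the bookkeeping needed to rule out incompatible targets $x^+$: I must verify that any $x^+$ inconsistent with the deterministic parts of the update (the advance of $k_i$, the $\hat U_i^j$ updates from the sensed partial utilities, and the fixed construction of $\mathcal{E}(a_i^1,a_i^2)$) receives probability zero, so that the only $\epsilon$-nontrivial contributions truly come from experiment initiation (rate $\epsilon^r$) and the final log-linear coin flip (rate $\epsilon^{\Delta_i}$). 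Once this is in hand, the additive formula (\ref{resist}) follows immediately.
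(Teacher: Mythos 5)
Your proposal is correct and follows essentially the same route as the paper: establish irreducibility from connectedness and aperiodicity from the self-loops at all-stationary states, factor $P_\epsilon(x,x^+)$ over the agents according to the four categories $I_{ss},I_{se},I_{ee},I_{es}$ (with the $I_{ee}$ factors deterministic), and read off the resistance as the sum of the exponents $r$ for each experiment initiation and $\Delta_i$ for each log-linear choice. The only cosmetic difference is that the paper also carries an $\epsilon$-independent factor $\Pr[S_i^+;a_i^1,a_i^2]$ for the (possibly randomized) choice of experiment path, which, as you note, does not affect the resistance.
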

\begin{proof}
Let $P_\epsilon$ denote the transition matrix of the Markov chain induced by the CFCM algorithm. For $\epsilon >0$, any all-stationary state can be reached from any other all-stationary state through a sequence of experiments, given $\mathcal{G}=(V,E)$ is connected. Furthermore, any state that is not all-stationary lies on a feasible path between two all-stationary states. Hence, $P_{\epsilon}$ is irreducible. Furthermore, since the stationary agents remain stationary with probability $1-\epsilon^r$, aperiodicity immediately follows from the resulting self-loops at all-stationary states.

 The probability any feasible transition from $x$ to $x^+$, given in $P_{\epsilon}$, is the joint probability of state transitions of individual agents. Note that for any agent,  $i \in I_{ee}(x)$, the transition from $x_i$ to $x_i^+$ does not have any randomness. Hence, the probability of transition  from $x$ to $x^+$ is 
\begin{multline}
\label{probs}
P_{\epsilon}(x, x^+) =   \Pr[I_{es}^1(x, x^+)]\Pr[I_{es}^2(x, x^+)]\Pr[I_{ss}(x, x^+)] \\ \Pr[I_{se}(x, x^+)],
\end{multline}
where each term on the right side of (\ref{probs}) denote the joint probability of state transitions for the agents in the corresponding subset, and they are given as
\begin{equation}
\label{PID1}
\Pr[I_{es}^1(x, x^+)]=\prod_{i\in I_{es}^1(x, x^+) } \frac{\epsilon^{-\hat{U}_i^1}}{\epsilon^{-\hat{U}_i^1}+\epsilon^{-\hat{U}_i^2}},
\end{equation}
\begin{equation}
\label{PID2}
\Pr[I_{es}^2(x, x^+)]=\prod_{i\in I_{es}^2(x, x^+) } \frac{\epsilon^{-\hat{U}_i^2}}{\epsilon^{-\hat{U}_i^1}+\epsilon^{-\hat{U}_i^2}},
\end{equation}
\begin{equation}
\label{PIS}
\Pr[I_{ss}(x, x^+)]=\prod_{i\in I_{ss}(x, x^+) } (1-\epsilon^r), 
\end{equation}
\begin{equation}
\label{PIE}
\Pr[I_{se}(x, x^+)]=\prod_{i\in I_{se}(x, x^+) } \frac{\epsilon^r}{|A_i^c(a_i^1)|} \Pr[S_i^{+} ; a_i^1, a_i^2], 
\end{equation}
where $\Pr[S_i^{+} ; a_i^1, a_i^2]$ is the probability of having $S_i^+$ as the experiment path for an agent comparing $a_i^1$ and $a_i^2$. $\Pr[S_i^{+} ; a_i^1, a_i^2]$ depends on the function $\mathcal{E}(a_i^1, a_i^2)$, and it is independent of $\epsilon$.
Plugging (\ref{PID1})-(\ref{PIE}) into (\ref{probs}),  one can verify that the resistance $R(x,x^+)$ given in (\ref{resist}) satisfies
 \begin{equation}
\label{res}
0 < \lim_ {\epsilon \rightarrow 0^+} \frac{P_{\epsilon}(x, x^+)}{\epsilon^{R(x,x^+)}} < \infty.
\end{equation}
\end{proof}

Since the CFCM algorithm induces a regular perturbed Markov chain, the stochastically stable states are the recurrent states of the unperturbed chain with the minimum stochastic potential, as given in Lemma \ref{stochrem}. Note that if $\epsilon =0$, then no agent starts an experiment. In that case, the set of recurrent states, $\mathcal{X}^0_R$, consists of the all-stationary states. All the other states, where at least one agent is experimenting, form the set of transient states, $\mathcal{X}^0_T$, i.e.

\begin{equation}
\label{recurstat}
\mathcal{X}^0_R = \{ x \mid I_s(x)=I\},
\end{equation}
\begin{equation}
\label{transtat}
\mathcal{X}^0_T = \mathcal{X}\setminus \mathcal{X}^0_R.
\end{equation}

The stochastic potentials of the states in $\mathcal{X}^0_R$ are determined by the resistances of the feasible transitions.  Note that the parameter $r$ in the CFCM algorithm has a direct influence on the resistances as given in (\ref{resist}).  We will show that, for any connected graph $\mathcal{G}$, if $r$ is sufficiently large, then the states in $\mathcal{X}^0_R$ with the minimum stochastic potential are the coverage maximizers. To provide a sufficient value of $r$, first we relate the structure of the graph to the maximum amount of estimated utility that can be denied by an agent in any feasible transition under the CFCM algorithm.


\begin{lem}
\label{gslem}
Let all agents follow the CFCM algorithm to cover a connected graph, $\mathcal{G}=(V,E)$, and let $\nu(\mathcal{G})$ be 
\begin{equation}
\label{graphdstar}
\nu(\mathcal{G}) = \max _{(v,v') \in E}|\mathcal{N}_{v}^\delta \setminus \mathcal{N}_{v'}^\delta|.
\end{equation}
Then, for any feasible transition $x\rightarrow x^+$, 
\begin{equation}
\label{gs1}
\nu(\mathcal{G}) \geq \max _{i \in I}\Delta_i(x_i,x_i^+).
\end{equation}
\end{lem}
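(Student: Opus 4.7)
The plan is to show that the denied estimated utility for any agent cannot exceed $\nu(\mathcal{G})$ by examining what actually accumulates into $\hat{U}_i^1$ and $\hat{U}_i^2$ during a completed experiment. First I would dispose of the easy cases: for any agent $i \notin I_{es}(x,x^+)$, we have $\Delta_i(x_i,x_i^+)=0$ by the definition in (\ref{denied}), and for $i \in I_{es}(x,x^+)$ the quantity $\Delta_i(x_i,x_i^+) = \hat{U}_i^* - \hat{U}_i^{j}$ (with $j\in\{1,2\}$ the chosen index) is bounded above by $|\hat{U}_i^1 - \hat{U}_i^2|$. Hence it suffices to bound $|\hat{U}_i^1 - \hat{U}_i^2|$ by $\nu(\mathcal{G})$ for an experimenting agent that has just completed traversing its experiment path.

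The key structural observation is that, by the CFCM algorithm (lines 11--14), a partial utility is sampled only at the last visit to each node along the experiment path, and whenever $a_i$ is that last visit and no other agent is within $\delta$, then $\hat{U}_i^1$ is incremented by $1$ if $a_i \in \mathcal{N}_{a_i^1}^\delta$ and $\hat{U}_i^2$ is incremented by $1$ if $a_i \in \mathcal{N}_{a_i^2}^\delta$. Thus every node $v$ in the experiment path contributes the same amount (either $0$ or $1$) to both estimates whenever $v \in \mathcal{N}_{a_i^1}^\delta \cap \mathcal{N}_{a_i^2}^\delta$, so the shared neighborhood cancels in the difference. What remains can be written as
\begin{equation*}
\hat{U}_i^1 - \hat{U}_i^2 = \!\!\!\sum_{v \in \mathcal{N}_{a_i^1}^\delta \setminus \mathcal{N}_{a_i^2}^\delta}\!\!\! c_v \;-\!\!\! \sum_{v \in \mathcal{N}_{a_i^2}^\delta \setminus \mathcal{N}_{a_i^1}^\delta}\!\!\! c_v,
\end{equation*}
where each $c_v \in \{0,1\}$ is the partial utility sampled at the last visit to $v$. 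Therefore
\begin{equation*}
|\hat{U}_i^1 - \hat{U}_i^2| \leq \max\bigl(|\mathcal{N}_{a_i^1}^\delta \setminus \mathcal{N}_{a_i^2}^\delta|,\; |\mathcal{N}_{a_i^2}^\delta \setminus \mathcal{N}_{a_i^1}^\delta|\bigr).
\end{equation*}

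Finally I would invoke the constrained-action-set definition (\ref{constrA}): since $a_i^2 \in A_i^c(a_i^1) = \mathcal{N}_{a_i^1}$, either $a_i^1=a_i^2$, in which case both symmetric-difference sets are empty, or $(a_i^1,a_i^2) \in E$, in which case each of $|\mathcal{N}_{a_i^1}^\delta \setminus \mathcal{N}_{a_i^2}^\delta|$ and $|\mathcal{N}_{a_i^2}^\delta \setminus \mathcal{N}_{a_i^1}^\delta|$ is at most $\nu(\mathcal{G})$ by definition (\ref{graphdstar}). Combining these bounds gives $\Delta_i(x_i,x_i^+) \leq |\hat{U}_i^1 - \hat{U}_i^2| \leq \nu(\mathcal{G})$ for every $i\in I$, which is the claim. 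I do not expect any real obstacle here; the only subtlety is making sure the "sample once at the last visit" rule is used correctly so that nodes in the intersection of the two $\delta$-neighborhoods contribute identically to both estimates, making the difference depend only on the symmetric difference of those neighborhoods.
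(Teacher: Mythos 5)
Your proposal is correct and follows essentially the same route as the paper's own proof: bound $\Delta_i$ by $|\hat{U}_i^1-\hat{U}_i^2|$, observe that nodes in $\mathcal{N}_{a_i^1}^\delta \cap \mathcal{N}_{a_i^2}^\delta$ contribute identically to both estimates so only the symmetric difference matters, and then invoke the definition of $\nu(\mathcal{G})$ via $(a_i^1,a_i^2)\in E$. Your explicit handling of the degenerate case $a_i^1=a_i^2$ (possible since $A_i^c(a_i^1)=\mathcal{N}_{a_i^1}$ is the closed neighborhood) is a small point of added care that the paper's proof glosses over.
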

\begin{proof}
Let $x\rightarrow x^+$ be a feasible transition. For any $i \in I_{s}(x)$, $\hat{U}_i^1=\hat{U}_i^2=0$. On the other hand, for any $i \in I_{e}(x)$, the sampled partial utilities from the nodes $\mathcal{N}_{a_i^1}^{\delta} \cap \mathcal{N}_{a_i^2}^{\delta}$, contribute equally to both $\hat{U}_i^1$ and $\hat{U}_i^2$. Hence,

\begin{equation}
\label{gs2}
\mbox{max} \{|\mathcal{N}_{a_i^1}^{\delta} \setminus \mathcal{N}_{a_i^2}^{\delta}|, |\mathcal{N}_{a_i^2}^{\delta} \setminus \mathcal{N}_{a_i^1}^{\delta}|\} \geq |\hat{U}_i^1-\hat{U}_i^2|, \; \forall i \in I.
\end{equation}
In light of (\ref{gs2}) and (\ref{denied}), 
\begin{equation}
\label{gs3}
\mbox{max} \{|\mathcal{N}_{a_i^1}^{\delta} \setminus \mathcal{N}_{a_i^2}^{\delta}|, |\mathcal{N}_{a_i^2}^{\delta} \setminus \mathcal{N}_{a_i^1}^{\delta}|\} \geq \Delta_i(x_i,x_i^+), \; \forall i \in I.
\end{equation}
Since $(a_i^1, a_i^2) \in E$ for any $i \in I_e(x)$, (\ref{graphdstar}) implies
\begin{equation}
\label{gs4}
\nu(\mathcal{G}) \geq \mbox{max} \{|\mathcal{N}_{a_i^1}^{\delta} \setminus \mathcal{N}_{a_i^2}^{\delta}|, |\mathcal{N}_{a_i^2}^{\delta} \setminus \mathcal{N}_{a_i^1}^{\delta}|\}, \; \forall i \in I.
\end{equation}
Finally, (\ref{gs3}) and (\ref{gs4}) together imply (\ref{gs1}).

\end{proof}


Next, we show that $r> \nu(\mathcal{G})$ is a sufficient condition to ensure that the paths between the states in $\mathcal{X}_R^0$ on a minimum resistance tree consist of unilateral experimentations. 

\begin{definition} \emph{(Unilateral Experimentation Path)}: A feasible sequence of states, $\mathcal{P}=\{ x^1, x^2, \hdots x^n \}$, is a unilateral experimentation path if $x^1,x^n \in \mathcal{X}_R^0$, $x^2, \hdots, x^{n-1} \in \mathcal{X}_T^0$  and for all $1\leq p \leq n-1$
\begin{equation}
\label{unit}
 |I_{se}(x^p,x^{p+1})|=  \left\{\begin{array}{ll} 1&\mbox{if $p=1$}, \\ 0&\mbox{otherwise. }\end{array}\right.
 \end{equation}
\end{definition}
\begin{lem}
\label{unilateral1}
Let $\mathcal{T}^*$ be a minimum resistance tree, and let $x\rightarrow x^+ \in \mathcal{T}^*$. If $x \in \mathcal{X}^0_R$, then $|I_{se}(x,x^+)|=1$. 
\end{lem}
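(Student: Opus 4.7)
My plan: Since $x \in \mathcal{X}^0_R$, every agent is stationary in $x$, so $I_e(x) = \emptyset$ and consequently $I_{es}(x,x^+) = \emptyset$. Substituting this into the resistance expression of Lemma \ref{RPMC} immediately gives $R(x,x^+) = r|I_{se}(x,x^+)|$. Because $\mathcal{T}^*$ is a rooted spanning tree, the edge $x \to x^+$ cannot be a self-loop, so $x^+ \neq x$. The only way an all-stationary state can transition to a different state under the CFCM dynamics is for at least one stationary agent to initiate a new experiment, which yields $|I_{se}(x,x^+)| \geq 1$. The remaining task is therefore to rule out $|I_{se}(x,x^+)| \geq 2$.

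I would argue this by contradiction. Suppose $|I_{se}(x,x^+)| = k \geq 2$, so that $R(x,x^+) = rk$. Fix any agent $i^* \in I_{se}(x,x^+)$ and define an auxiliary state $x'$ that agrees with $x$ in every coordinate except that of agent $i^*$, and in which $i^*$ has just initiated the \emph{identical} experiment (same candidate $a_{i^*}^2$ and same experiment path $S_{i^*}^+$) that it initiates at $x^+$. Then $x \to x'$ is a feasible single-step transition in which exactly one agent starts experimenting and no agent completes one, so Lemma \ref{RPMC} gives $R(x,x') = r$, strictly less than $rk$.

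The goal is now to exhibit a spanning tree $\mathcal{T}'$ of strictly smaller total resistance than $\mathcal{T}^*$, contradicting minimality. The natural move is to delete the outgoing edge $x \to x^+$ from $\mathcal{T}^*$ and insert $x \to x'$ in its place; this would decrease the total resistance by exactly $r(k-1) > 0$. For the result to be a valid arborescence with the same root, one needs $x'$ to lie outside the subtree rooted at $x$ in $\mathcal{T}^*$; otherwise the replacement introduces a directed cycle. This is the main technical obstacle.

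I expect to resolve the obstacle by exploiting the freedom to choose $i^*$: there are $k \geq 2$ candidate auxiliary states, one per agent in $I_{se}(x,x^+)$, and I would argue that not all of them can lie in the subtree rooted at $x$. In the event that every such auxiliary state does lie in that subtree, the unique $\mathcal{T}^*$-path from some $x'$ back to $x$ must consist of edges whose resistances are controlled by Lemma \ref{gslem} (since each denied-utility term is bounded by $\nu(\mathcal{G})$), and one can then perform a more intricate surgery that reroutes along this path while still banking the saving of $r(k-1)$. Either way, the minimality of $\mathcal{T}^*$ is contradicted, forcing $|I_{se}(x,x^+)| = 1$.
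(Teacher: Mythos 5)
Your core argument is the one the paper uses: since $x$ is all-stationary, $I_{es}(x,x^+)=\emptyset$ and $R(x,x^+)=r|I_{se}(x,x^+)|$ by Lemma \ref{RPMC}; a tree edge cannot be a self-loop, so at least one agent must start an experiment; and if $k=|I_{se}(x,x^+)|\geq 2$ one constructs a feasible alternative transition out of $x$ with strictly smaller resistance and swaps it for $x\rightarrow x^+$. The only cosmetic difference is the auxiliary state: the paper cancels a single experimenter (so the swap saves exactly $r$), whereas you retain a single experimenter (saving $r(k-1)$); either suffices. Where you diverge is that the paper simply performs the replacement and asserts the result is a lower-resistance tree, while you flag that the rerouted graph is an arborescence only if $x'$ does not lie in the subtree of $\mathcal{T}^*$ rooted at $x$.

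Having raised that obstacle, however, you do not close it, and the two resolutions you sketch do not work as stated. The claim that not all $k$ candidate states $x'$ can be descendants of $x$ is asserted without argument, and there is no evident reason it must hold: the descendant set of $x$ in $\mathcal{T}^*$ can be large enough to contain all of them. The fallback "more intricate surgery" is unspecified, and its appeal to Lemma \ref{gslem} is unavailable here: Lemma \ref{unilateral1} carries no hypothesis $r>\nu(\mathcal{G})$ (that assumption first enters in Lemma \ref{unilateral2}), so you cannot trade a saving of order $r$ against per-edge costs bounded only by $\nu(\mathcal{G})$. As written, your proof therefore stops exactly at the step you yourself identify as the crux. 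To align with the paper you would either perform the edge swap directly, as its proof does, or supply an actual acyclicity argument for the modified out-map; the latter is a genuine addition that your proposal promises but does not deliver.
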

\begin{proof} Since $x \in \mathcal{X}^0_R$, $|I_{se}(x,x^+)|>0$, as otherwise, $x^+=x$ and $x\rightarrow x^+$ cannot be contained in a tree. Assume that $|I_{se}(x,x^+)|>1$. Then, choose an arbitrary $i \in I_{se}(x,x^+)$ to define an $\tilde{x}^+ \neq x$ as
\begin{equation}
\tilde{x}_j^+ = \left\{\begin{array}{ll} x^+_j &\mbox{ if $j \neq i$}, \\ x_i &\mbox{ otherwise. }\end{array}\right.
\end{equation} 
Note that $x\rightarrow \tilde{x}^+$ is a feasible transition, and $R(x, \tilde{x}^+) = R(x, x^+)-r(|I_{se}(x,x^+)|-1)$. Replacing $x\rightarrow x^+ $ with $x\rightarrow \tilde{x}^+$ would give an alternative tree with a smaller resistance, which contradicts with $\mathcal{T}$ being a minimum resistance tree.
\end{proof}

\begin{lem}
\label{unilateral2}
Let $\mathcal{T}^*$ be a minimum resistance tree, and let $x\rightarrow x^+ \in \mathcal{T}^*$. If $x \in \mathcal{X}^0_T$ and $r>\nu(\mathcal{G})$,  then we have $|I_{se}(x,x^+)| < |I_e(x)|$.
\end{lem}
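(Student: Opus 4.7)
The plan is to argue by contradiction. Suppose $|I_{se}(x,x^+)| \geq |I_e(x)|$ and set $n = |I_e(x)| \geq 1$ and $k = |I_{se}(x,x^+)| \geq n$. The strategy is to exhibit a spanning tree of strictly smaller total resistance than $\mathcal{T}^*$, contradicting its minimality.

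First I would construct a natural alternative next-state $\tilde{x}^+$ by canceling all $k$ new experiments while letting the existing experiments continue exactly as in $x^+$: set $\tilde{x}_i^+ = x_i$ for every $i \in I_{se}(x,x^+) \cup I_{ss}(x,x^+)$, and $\tilde{x}_i^+ = x_i^+$ for every $i \in I_e(x)$. Because $x \in \mathcal{X}^0_T$ forces $I_e(x) \neq \emptyset$ and every experimenting agent must change its state in one step of CFCM, $\tilde{x}^+ \neq x$ and $x \to \tilde{x}^+$ is a feasible transition. Accounting for the contributions in (\ref{resist}) yields $R(x, \tilde{x}^+) = R(x, x^+) - rk$.

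If $\tilde{x}^+$ is not a descendant of $x$ in $\mathcal{T}^*$, then replacing the edge $x \to x^+$ by $x \to \tilde{x}^+$ already produces a spanning tree with total resistance $R(\mathcal{T}^*) - rk < R(\mathcal{T}^*)$, contradicting minimality. The subtle case, and where the hypothesis $r > \nu(\mathcal{G})$ becomes essential, is when $\tilde{x}^+$ does lie in the subtree rooted at $x$, so that the naive replacement creates the cycle $x \to \tilde{x}^+ = w_0 \to w_1 \to \cdots \to w_m = x$. Because the CFCM dynamics only allow an experimenting agent's index to advance or the experiment to complete, each $i \in I_e(x)$---whose state in $\tilde{x}^+$ is already one step further along its experiment path than in $x$---can be returned to state $x_i$ along the cycle only by completing its current experiment and then starting a new one, contributing at least $r$ of resistance to some edge $w_{p_i} \to w_{p_i+1}$ on the cycle.

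The proposed tree surgery is, for each of the $n$ agents $i \in I_e(x)$, to delete the cycle edge $w_{p_i} \to w_{p_i+1}$ and re-route it through the feasible transition in which agent $i$'s restart is omitted; by Lemma \ref{gslem}, each such re-routing increases the local resistance by at most $\nu(\mathcal{G})$. Combined with the original swap at $x$, the total resistance of the resulting spanning tree changes by at most $-rk + n\nu(\mathcal{G}) \leq n(\nu(\mathcal{G}) - r) < 0$, contradicting the minimality of $\mathcal{T}^*$ and forcing $|I_{se}(x, x^+)| < |I_e(x)|$. The main obstacle is rigorously verifying that after the $n$ rerouting operations the resulting directed graph is still a spanning tree rooted at the same vertex with every non-root state carrying exactly one outgoing edge; the monotonic progression of experiment indices is the key ingredient that prevents the introduction of new cycles during the surgery.
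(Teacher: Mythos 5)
Your core construction is the same as the paper's: delete the tree edge $x \rightarrow x^+$ and insert $x \rightarrow \tilde{x}^+$, where $\tilde{x}^+$ cancels the newly started experiments, then argue that the resistance drops. (The paper takes an arbitrary feasible $\tilde{x}^+$ with $I_{se}(x,\tilde{x}^+)=\emptyset$ and bounds $R(x,\tilde{x}^+) \leq |I_{es}(x,\tilde{x}^+)|\,\nu(\mathcal{G}) \leq |I_e(x)|\,\nu(\mathcal{G})$ via Lemma~\ref{gslem}; that is exactly where $r>\nu(\mathcal{G})$ and the count $|I_e(x)|$ enter. Your specific $\tilde{x}^+$, in which the ongoing experiments proceed exactly as in $x^+$, gives the sharper identity $R(x,\tilde{x}^+)=R(x,x^+)-r|I_{se}(x,x^+)|$.) The paper then simply asserts that the swap yields a tree of smaller resistance; it does not discuss the possibility that $\tilde{x}^+$ is a descendant of $x$.

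The genuine gap is in your cycle-handling surgery, which is the only place your argument uses $r>\nu(\mathcal{G})$ and $|I_{se}(x,x^+)|\geq|I_e(x)|$ at all --- a warning sign, since your swap step alone would otherwise prove the strictly stronger claim $|I_{se}(x,x^+)|=0$ and render the lemma's hypotheses superfluous. Concretely: (i) ``re-routing'' $w_{p_i}\rightarrow w_{p_i+1}$ to the transition in which agent $i$'s restart is omitted sends the edge to a \emph{different} state $\tilde{w}_{p_i+1}$, and you never establish that $\tilde{w}_{p_i+1}$ sits in the tree with a path to the root avoiding all deleted or modified edges; (ii) the claimed cost of at most $\nu(\mathcal{G})$ per re-routing is not derived --- omitting a restart \emph{lowers} that edge's resistance by $r$ under (\ref{resist}), while Lemma~\ref{gslem} bounds denied utilities $\Delta_i$, which is not what changes on that edge --- so the total change $-rk+n\nu(\mathcal{G})$ is unsupported; and (iii) you concede yourself that the resulting directed graph has not been verified to be a spanning tree, which is the whole point of the surgery. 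As written the proposal does not close the lemma: either perform the single swap and the resistance comparison at the paper's level of rigor, or carry out the tree reconstruction in full.
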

\begin{proof} Since $x \in \mathcal{X}^0_T$,  $I_{se}(x,\tilde{x}^+)= \emptyset$ doesn't imply $\tilde{x}^+=x$.  Hence, there exists an $\tilde{x}^+ \neq x$ such that $x\rightarrow \tilde{x}^+$ is feasible and $I_{se}(x,\tilde{x}^+)= \emptyset$. For any such $\tilde{x}^+$, we have 
\begin{equation}
\label{altern1}
R(x, \tilde{x}^+)-R(x, x^+)  \leq - r |I_{se}(x,x^+)|+  |I_{es}(x,\tilde{x}^+)| \nu(\mathcal{G}).
\end{equation} 
Note that $|I_{es}(x,\tilde{x}^+)| \leq |I_{e}(x)|$. Hence, given $r>\nu(\mathcal{G})$, the right side of (\ref{altern1}) is negative for any $|I_{se}(x,x^+)| \geq |I_e(x)|$. In that case, replacing $x\rightarrow x^+ $ with $x\rightarrow \tilde{x}^+$ would give an alternative tree with a smaller resistance, which contradicts with $\mathcal{T}$ being a minimum resistance tree. Consequently, $|I_{se}(x,x^+)| < |I_e(x)|$.
 \end{proof}

\begin{lem}
\label{unilateral}
Let $r>\nu(\mathcal{G})$, and let $\mathcal{P}=\{ x^1, x^2, \hdots x^n \}$ be a sequence of states, where $x^1,x^n \in \mathcal{X}^0_R$ and $x^2, \hdots, x^{n-1} \in \mathcal{X}^0_T$. If $\mathcal{P} \in \mathcal{T}$ for some minimum resistance tree $\mathcal{T}$, then $\mathcal{P}$ is a unilateral experimentation path.
\end{lem}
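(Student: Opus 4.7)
The plan is to combine Lemma unilateral1 (exactly one experiment starts from an all-stationary state on $\mathcal{T}^*$) with Lemma unilateral2 (the number of new experiments is strictly less than the number of ongoing experiments) and propagate the count $|I_e(x^p)|$ along the path. The intuition is that Lemma unilateral1 seeds the path with exactly one experimenter, and Lemma unilateral2 prevents any additional experimenters from being spawned while at least one experiment is still in progress.

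First I would dispatch the initial transition. Since $x^1 \in \mathcal{X}_R^0$ and $x^1 \to x^2$ is an edge of $\mathcal{T}$, Lemma unilateral1 gives $|I_{se}(x^1,x^2)| = 1$, which matches the first clause in the definition of a unilateral experimentation path. Because $|I_e(x^1)| = 0$ and no experiments can terminate from an all-stationary state, this also yields the base of an induction: $|I_e(x^2)| = 1$.

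Next I would run an induction on $p$ from $2$ to $n-1$ to show simultaneously that $|I_e(x^p)| = 1$ and $|I_{se}(x^p,x^{p+1})| = 0$. Given $|I_e(x^p)| = 1$ and $x^p \in \mathcal{X}_T^0$, Lemma unilateral2 (which requires $r > \nu(\mathcal{G})$) gives $|I_{se}(x^p,x^{p+1})| < 1$, hence equals $0$. Then using the bookkeeping identity
\begin{equation}
|I_e(x^{p+1})| = |I_e(x^p)| + |I_{se}(x^p,x^{p+1})| - |I_{es}(x^p,x^{p+1})|,
\end{equation}
together with $|I_{es}(x^p,x^{p+1})| \leq |I_e(x^p)| = 1$, I obtain $|I_e(x^{p+1})| \in \{0,1\}$. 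If $p+1 \leq n-1$, then by hypothesis $x^{p+1} \in \mathcal{X}_T^0$, which forces $|I_e(x^{p+1})| \geq 1$ and hence $|I_e(x^{p+1})| = 1$, closing the induction. The case $p+1 = n$ is consistent as well, since then $x^n \in \mathcal{X}_R^0$ simply means $|I_{es}(x^{n-1},x^n)| = 1$, i.e.\ the unique experimenter finally commits to a candidate action.

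Combining both parts, every intermediate transition has $|I_{se}| = 0$ while the initial transition has $|I_{se}| = 1$, matching the definition of a unilateral experimentation path. The only part that needs a little care is the edge-case handling at $p = n-1$ versus $p < n-1$, because the forcing argument relies on membership in $\mathcal{X}_T^0$ to rule out $|I_e(x^{p+1})| = 0$ mid-path; distinguishing these in the write-up is the main (minor) obstacle, and everything else follows directly from the two previous lemmas.
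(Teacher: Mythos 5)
Your proof is correct and follows essentially the same route as the paper: seed the path with a single experimenter via Lemma \ref{unilateral1}, then apply Lemma \ref{unilateral2} recursively to keep $|I_{se}|=0$ along the transient portion. Your version merely makes the paper's ``recursive'' step explicit via the bookkeeping identity for $|I_e(x^{p+1})|$ and the $\mathcal{X}_T^0$-membership argument, which is a slightly more careful write-up of the same induction.
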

\begin{proof} Since $x^1\in \mathcal{X}^0_R$, from Lemma \ref{unilateral1}, we have $|I_{se}(x^1,x^2)|=1$ leading to $|I_e(x^2)|=1$. Furthermore, for $r>\nu(\mathcal{G})$, from Lemma \ref{unilateral2}, we have $|I_{se}(x^2,x^3)|=0$. Hence, we have $|I_e(x^3)| \leq 1$. Using Lemma \ref{unilateral2} recursively along $\mathcal{P}$ we obtain
\begin{equation}
\label{unitexpprop}
|I_{se}(x^p,x^{p+1})|= \left\{\begin{array}{ll} 1 &\mbox{ if $p=1$}, \\ 0 &\mbox{ otherwise. }\end{array}\right.
\end{equation}
Hence, $\mathcal{P}$ is a unilateral experimentation path.
\end{proof}

\begin{lem}
\label{singleres}
If $\mathcal{P}=\{ x^1, x^2, \hdots x^n \}$ be a unilateral experimentation path, then 
\begin{equation}
\label{resist1}
R(\mathcal{P})= \sum_{p=1}^{n-1}R(x^p,x^{p+1})=r+\max\{\phi(x^n),\phi(x^1)\}-\phi(x^n).
\end{equation} 
\end{lem}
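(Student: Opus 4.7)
The plan is to decompose $R(\mathcal{P}) = \sum_{p=1}^{n-1} R(x^p, x^{p+1})$ by applying the per-transition resistance formula from Lemma \ref{RPMC}, and then to reduce the remaining piece to the potential difference using the structure of $\Gamma_{\text{DGC}}$.

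First, I would unpack the anatomy of a unilateral experimentation path. By the defining condition (\ref{unit}), exactly one agent, call it $i$, starts an experiment in the first transition $x^1 \to x^2$, so $|I_{se}(x^1,x^2)| = 1$. No further experiments start along $\mathcal{P}$. Since the intermediate states $x^2,\ldots,x^{n-1}$ are transient, agent $i$ must still be in the middle of traversing its experiment path at each of them, so no agent transitions from experimenting to stationary in those steps, i.e.\ $I_{es}(x^p,x^{p+1}) = \emptyset$ for $p = 1,\ldots,n-2$. Finally, at the last transition agent $i$ completes the experiment and becomes stationary, giving $|I_{es}(x^{n-1},x^n)| = 1$ and $I_{se}(x^{n-1},x^n) = \emptyset$. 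Substituting into (\ref{resist}), the first transition contributes $r$, the middle transitions contribute $0$, and the last contributes $\Delta_i(x_i^{n-1},x_i^n)$, hence
\begin{equation*}
R(\mathcal{P}) \;=\; r + \Delta_i(x_i^{n-1},x_i^n).
\end{equation*}

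Next, I would evaluate $\Delta_i$ by exploiting the fact that every agent $j \neq i$ is stationary throughout $\mathcal{P}$ with a common action profile $a_{-i}$. Consequently, the partial utilities $u_i(v,a_{-i})$ sampled along the experiment path correspond to the true $a_{-i}$, and since the experiment path traverses every node in $\mathcal{N}^{\delta}_{a_i^1} \cup \mathcal{N}^{\delta}_{a_i^2}$, the estimations coincide with the true utilities, $\hat{U}_i^1 = U_i(a_i^1,a_{-i})$ and $\hat{U}_i^2 = U_i(a_i^2,a_{-i})$. Invoking Lemma \ref{potgam},
\begin{equation*}
\hat{U}_i^2 - \hat{U}_i^1 \;=\; \phi(a_i^2,a_{-i}) - \phi(a_i^1,a_{-i}).
\end{equation*}

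Finally, noting that the action profile at $x^1$ is $(a_i^1,a_{-i})$ and at $x^n$ is $(a_i^{j^*},a_{-i})$ for the selected candidate index $j^* \in \{1,2\}$, a short case split combined with the definition (\ref{denied}) of $\Delta_i$ identifies $\hat{U}_i^{j^*}$ with $\phi(x^n) - C$ (where $C = |\bigcup_{j \neq i} \mathcal{N}^{\delta}_{a_j}|$) and $\hat{U}_i^* = \max\{\hat{U}_i^1,\hat{U}_i^2\}$ with $\max\{\phi(x^1),\phi(x^n)\} - C$, yielding
\begin{equation*}
\Delta_i \;=\; \max\{\phi(x^1),\phi(x^n)\} - \phi(x^n),
\end{equation*}
from which the claimed formula follows.

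The main obstacle is the first step: carefully verifying that the feasibility of $\mathcal{P}$ together with the unilateral condition forces every non-$i$ agent to hold a constant state throughout the path, and that agent $i$'s traversal contributes no randomness in the middle steps. Once this bookkeeping is in place, the potential-game identity translates the estimated-utility gap into the potential gap $\phi(a_i^2,a_{-i}) - \phi(a_i^1,a_{-i})$, and the $\max$ in the statement cleanly absorbs both possible outcomes of agent $i$'s final selection.
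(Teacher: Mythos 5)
Your proposal is correct and follows essentially the same route as the paper's proof: decompose $R(\mathcal{P})$ into $r$ from the first transition, zero from the intermediate transitions, and $\Delta_i(x_i^{n-1},x_i^n)$ from the last, then observe that a constant $a_{-i}$ makes the estimated utilities exact and invoke the potential-game identity to convert the utility gap into the potential gap. Your bookkeeping for why $I_{es}(x^p,x^{p+1})=\emptyset$ on the intermediate steps is in fact slightly more explicit than the paper's, but the argument is the same.
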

\begin{proof} Since $\mathcal{P}=\{ x^1, x^2, \hdots x^n \}$ be a unilateral experimentation path, for $x^p, x^{p+1} \in \mathcal{X}_T^0$, we have 
\begin{equation}
\label{singleres1}
|I_{se}(x^p,x^{p+1})|= |I_{es}(x^p,x^{p+1})|=0.
\end{equation}
Hence, such transitions have zero resistance, resulting in 
\begin{equation}
\label{singleres2}
R(\mathcal{P})= R(x^1,x^2)+R(x^{n-1},x^n).
\end{equation}
Note that, since $x^1 \in \mathcal{X}_R^0$ and $\mathcal{P}$ is a unilateral experimentation path, we have $R(x^{1},x^2)=r$ and $R(x^{n-1},x^n)=\Delta_i(x_i^{n-1},x_i^n)$,
where $i\in I$ is the unique experimenting agent. Since all the other agents are stationary, i.e. $a_{-i}$ is constant along $\mathcal{P}$, the estimated utilities satisfy 
\begin{equation}
\label{singleres3}
(\hat{U}^1_i)^{n-1}= \sum_{v \in \mathcal{N}^{\delta}_{a^1_i}}u(v, a_{-i})= U_i(a^1_i, a_{-i}),
\end{equation}
\begin{equation}
\label{singleres4}
(\hat{U}^2_i)^{n-1}= \sum_{v \in \mathcal{N}^{\delta}_{a^2_i}}u(v, a_{-i})=U_i(a^2_i, a_{-i}).
\end{equation}
Plugging (\ref{singleres3}) and (\ref{singleres4}) into (\ref{denied}) we obtain
\begin{equation}
\label{singleres5}
\Delta_i(x_i^{n-1},x_i^n)= \max\{U_i(x^n),U_i(x^1)\}-U_i(x^n).
\end{equation}
Since $\Gamma_{\text{DGC}}$ is a potential game, from (\ref{singleres5}) we obtain
\begin{equation}
\label{singleres6}
\Delta_i(x_i^{n-1},x_i^n)= \max\{\phi(x^n),\phi(x^1)\}-\phi(x^n).
\end{equation}
 \end{proof}


\begin{lem}
\label{maxpot}
Let $r>\nu(\mathcal{G})$, and let $\mathcal{T}^*_x$ and $\mathcal{T}^*_{x'}$ be minimum resistance trees rooted at some $x, x'  \in \mathcal{X}_R^0$. Then,
\begin{equation}
\label{maxpoteq}
R(\mathcal{T}_x^*) \leq R(\mathcal{T}_{x'}^*) \Rightarrow \phi(x)\geq \phi(x').
\end{equation}
\end{lem}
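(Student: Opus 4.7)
The idea is a standard tree-surgery argument that exploits the reversibility of unilateral experimentation paths. Starting from the minimum resistance tree $\mathcal{T}^*_x$, I would construct a competing spanning tree rooted at $x'$ by taking the unique directed path in $\mathcal{T}^*_x$ from $x'$ to $x$ and reversing it; all other edges of $\mathcal{T}^*_x$ remain intact. Since every state other than those on this path already pointed (transitively) to some vertex of the path, and since now the path itself leads to $x'$ instead of $x$, the result is a valid spanning tree $\mathcal{T}_{x'}$ rooted at $x'$. The optimality of $\mathcal{T}^*_{x'}$ then forces $R(\mathcal{T}^*_{x'}) \le R(\mathcal{T}_{x'})$, and the rest is bookkeeping on how the resistance changes under reversal.

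To carry out the bookkeeping, I would first decompose the directed path from $x'$ to $x$ in $\mathcal{T}^*_x$ into consecutive unilateral experimentation segments between recurrent states, say $x' = y_0, y_1, \dots, y_k = x$. This decomposition is legitimate because, by Lemma \ref{unilateral} (which applies thanks to the hypothesis $r > \nu(\mathcal{G})$), every maximal subpath of $\mathcal{T}^*_x$ whose endpoints lie in $\mathcal{X}_R^0$ and whose interior lies in $\mathcal{X}_T^0$ must be a unilateral experimentation path. Reversibility of the constrained action sets (Property \ref{proper2}) ensures that reversing such a segment yields another feasible unilateral experimentation path, so the reconstructed tree is admissible.

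Next I would apply Lemma \ref{singleres} to each segment in both directions. The forward segment from $y_j$ to $y_{j+1}$ has resistance $r + \max\{\phi(y_j),\phi(y_{j+1})\} - \phi(y_{j+1})$, while its reverse from $y_{j+1}$ to $y_j$ has resistance $r + \max\{\phi(y_j),\phi(y_{j+1})\} - \phi(y_j)$. Subtracting, the per-segment change is exactly $\phi(y_{j+1}) - \phi(y_j)$; summing telescopically over $j = 0, \dots, k-1$ gives
\begin{equation*}
R(\mathcal{T}_{x'}) - R(\mathcal{T}^*_x) = \phi(x) - \phi(x').
\end{equation*}
Combining with $R(\mathcal{T}^*_{x'}) \le R(\mathcal{T}_{x'})$ yields $R(\mathcal{T}^*_{x'}) - R(\mathcal{T}^*_x) \le \phi(x) - \phi(x')$, so that whenever $R(\mathcal{T}^*_x) \le R(\mathcal{T}^*_{x'})$ the left side is nonnegative and therefore $\phi(x) \ge \phi(x')$, which is the claim.

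\textbf{Main obstacle.} The one subtlety I want to pin down carefully is that the surgery truly produces a spanning tree: one must verify that reversing the $x' \to x$ branch does not create a cycle or orphan any subtree. This is routine once it is observed that the removed and re-added edges span exactly the same set of vertices, but writing it cleanly requires being precise about the decomposition of the path through intermediate recurrent states and the role of transient sub-paths. A secondary issue is confirming, via Property \ref{proper2} and the symmetric definition of experiment paths, that the reversed unilateral experimentation path is genuinely a feasible path of the induced Markov chain with the resistance predicted by Lemma \ref{singleres}; this is where the assumption $r > \nu(\mathcal{G})$ enters crucially through Lemma \ref{unilateral}.
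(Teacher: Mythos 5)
Your proposal is correct and follows essentially the same route as the paper: the same tree surgery that reverses the unique $x' \to x$ branch of $\mathcal{T}^*_x$, the same decomposition into unilateral experimentation segments via Lemma \ref{unilateral}, and the same telescoping application of Lemma \ref{singleres} yielding $R(\mathcal{T}_{x'}) - R(\mathcal{T}^*_x) = \phi(x) - \phi(x')$. Your explicit attention to the validity of the spanning-tree surgery and the feasibility of the reversed paths is a point the paper leaves implicit, but it does not change the argument.
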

\begin{proof} 
For $r>\nu(\mathcal{G})$, in light of Lemma \ref{unilateral}, the paths between the states in $\mathcal{X}_R^0$ on a minimum resistance tree consist of unilateral experimentations.  Let $x^0_R \in \mathcal{X}_R^0$, and let  $\mathcal{T}^*_{x_R^0}$ be a minimum resistance tree rooted at $x^0_R$. Let $x^n_R \in \mathcal{X}_R^0$ be a state such that $R(\mathcal{T}^*_{x_R^0}) \leq R(\mathcal{T}^*_{x_R^n})$ and the unique path, $\mathcal{P}\in \mathcal{T}^*_{x_R^0}$, from $x^n_R$ to $x^0_R$ consists of $n$ unilateral experimentations, i.e.
\begin{equation}
\label{multP}
R(\mathcal{P}) = \sum_{k=1}^{n}R(\mathcal{P}_k),
\end{equation}
where $\mathcal{P}_k$ is the unilateral experimentation starting at $x_R^{n-k+1}$ and ending at $x_R^{n-k}$. Note that, for each such $\mathcal{P}_k$, there exists a feasible unilateral experimentation path $\mathcal{P}_k'$ in the reversed direction, starting at $x_R^{n-k}$ and ending at  $x_R^{n-k+1}$. Replacing each $\mathcal{P}_k$ with $\mathcal{P}_k'$, one can construct a tree rooted at  $\mathcal{T}_{x_R^n}$. Note that the resistances of these trees satisfy
\begin{eqnarray}
\label{multP2}
R(\mathcal{T}_{x_R^n}) - R(\mathcal{T}_{x_R^0}^*) & =& \sum_{k=1}^{n}(R(\mathcal{P}_k')-R(\mathcal{P}_k)) \nonumber \\
&=& \sum_{k=1}^{n} (\phi(x_R^{n-k}) - \phi(x_R^{n-k+1})) \nonumber  \\
&=& \phi(x_R^{0}) - \phi(x_R^{n}). 
\end{eqnarray}
Note that by definition $R(\mathcal{T}^*_{x_R^n}) \leq R(\mathcal{T}_{x_R^n})$. Hence, if $R(\mathcal{T}^*_{x_R^0}) \leq R(\mathcal{T}^*_{x_R^n})$, then $R(\mathcal{T}^*_{x_R^0}) \leq R(\mathcal{T}_{x_R^n})$ for any $\mathcal{T}_{x_R^n}$. Plugging this into (\ref{multP2}), we obtain $\phi(x_R^{0}) \geq \phi(x_R^{n})$

\end{proof}

\begin{theorem}
\label{maintheo}
Let  $\mathcal{G}=(V,E)$ be connected graph. Let all agents follow the $CFCM$ algorithm with $r> \nu(\mathcal{G})$, and let $x$ be a stochastically stable state of the resulting Markov chain. Then, $x \in \mathcal{X}_R^0$ and 
\begin{equation}
|V_c(x)| \geq |V_c(x')|, \; \forall x'\in \mathcal{X}_R^0.
\label{limdistCFCM}
\end{equation}

\end{theorem}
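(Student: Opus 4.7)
The plan is to combine three ingredients already available in the excerpt: the stochastic stability characterization via resistance trees (Lemma~\ref{stochrem}), the identification of the recurrent classes of the unperturbed chain (equations~(\ref{recurstat})--(\ref{transtat})), and Lemma~\ref{maxpot} which orders the minimum resistance trees rooted in $\mathcal{X}_R^0$ by the potential $\phi$.

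First, I would establish the claim that any stochastically stable state lies in $\mathcal{X}_R^0$. By Lemma~\ref{RPMC}, the CFCM dynamics induce a regular perturbed Markov chain, so Lemma~\ref{stochrem} applies and every stochastically stable state must be recurrent under the unperturbed chain $P_0$. When $\epsilon=0$ no agent starts a new experiment, and any state with a currently experimenting agent eventually completes its path and lands in an all-stationary configuration without returning; hence, as recorded in (\ref{recurstat}), the set of recurrent states of $P_0$ is exactly $\mathcal{X}_R^0$. This immediately gives $x \in \mathcal{X}_R^0$.

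Next, I would pin down which states in $\mathcal{X}_R^0$ achieve the minimum stochastic potential. Since $r > \nu(\mathcal{G})$, Lemma~\ref{maxpot} tells us that for any $x, x' \in \mathcal{X}_R^0$, the ordering of stochastic potentials is reversed with respect to $\phi$: $R(\mathcal{T}_x^*) \le R(\mathcal{T}_{x'}^*)$ implies $\phi(x) \ge \phi(x')$. Therefore the states achieving the minimum stochastic potential inside $\mathcal{X}_R^0$ are precisely the maximizers of $\phi$ over $\mathcal{X}_R^0$. Combining this with Lemma~\ref{stochrem}, the stochastically stable states are exactly these $\phi$-maximizers.

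Finally, to translate the statement about $\phi$ into one about coverage, I would invoke (\ref{potf}), which sets $\phi(a) = |V_c(a)|$, so maximizing $\phi$ over $\mathcal{X}_R^0$ is identical to maximizing $|V_c|$ over $\mathcal{X}_R^0$, giving (\ref{limdistCFCM}). I do not expect any new obstacles in this argument; the substance of the proof is already absorbed into Lemma~\ref{maxpot}, whose own derivation (via unilateral experimentation paths and the sufficient size of $r$) was the real bottleneck of the analysis.
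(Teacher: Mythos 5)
Your proposal is correct and follows essentially the same route as the paper's own proof: invoke Lemma~\ref{RPMC} and Lemma~\ref{stochrem} to place stochastically stable states in $\mathcal{X}_R^0$ with minimum stochastic potential, apply Lemma~\ref{maxpot} under $r>\nu(\mathcal{G})$ to convert the resistance ordering into a potential ordering, and conclude via $\phi(x)=|V_c(x)|$. The only (harmless) overstatement is the claim that the stochastically stable states are \emph{exactly} the $\phi$-maximizers, which needs the converse implication of Lemma~\ref{maxpot}; the theorem as stated only requires the direction you actually use.
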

\begin{proof} 
Let $x$ be a stochastically stable state. Due to Lemma \ref{stochrem}, $x \in \mathcal{X}_R^0$ and $R(\mathcal{T}^*_x) \leq R(\mathcal{T}^*_{x'})$ for all $x' \in \mathcal{X}_R^0$. In light of Lemma \ref{maxpot}, if $r>\nu(\mathcal{G})$, then $R(\mathcal{T}^*_x) \leq R(\mathcal{T}^*_{x'})$ implies $\phi(x) \geq \phi(x')$ for all $x' \in \mathcal{X}_R^0$. As such, (\ref{limdistCFCM}) is satisfied since $\phi(x)=|V_c(x)|$.
\end{proof}

Theorem \ref{maintheo} indicates that if all agents follow the CFCM algorithm with sufficiently large $r$, then the stochastically stable states are all-stationary states maximizing the number of covered nodes. As such, the agents asymptotically maintain maximum coverage with an arbitrarily high probability for arbitrarily small values of the noise parameter $\epsilon$.

\section{Simulation Results}
\label{sims}
In this section, some simulation results are presented to demonstrate the performance of the proposed method. In the simulation, a group of $13$ agents are initially placed at an arbitrary node of a connected random geometric graph. Each agent has a sensing range $\delta=1$. The graph consists of $50$ nodes and $78$ edges, and it has $\nu(\mathcal{G})=4$. Note that  $r>\nu(\mathcal{G})$ is a sufficient condition for the stochastic stability of potential maximizers due to the sufficiently high resistance of simultaneous experiments as given in Lemma \ref{unilateral}. However, $r>\nu(\mathcal{G})$ may not be necessary in many cases since simultaneously updating agents do not necessarily influence the utility estimations of each other, especially when they are sufficiently far from each other. In this simulation, the agents follow the CFCM algorithm with $\epsilon=0.015$ and $r=1.5$.

All the agents are initially stationary at the same position on the graph. The number of covered nodes throughout a period of 200000 time steps is shown in Fig. \ref{uncht_cf}, whereas the configuration of the agents on the graph at some instants are provided in Fig. \ref{covtime_cf}. As depicted in Fig. \ref{uncht_cf}, after a sufficient amount of time, the agents maintain complete coverage with a very high probability. For $150000 \leq t \leq 200000$, the average number of covered nodes at each time step is computed as $49.7$.  



\begin{figure}[htb]
\centering
\includegraphics[trim =45mm 0mm 10mm 10mm, clip,scale=0.27]{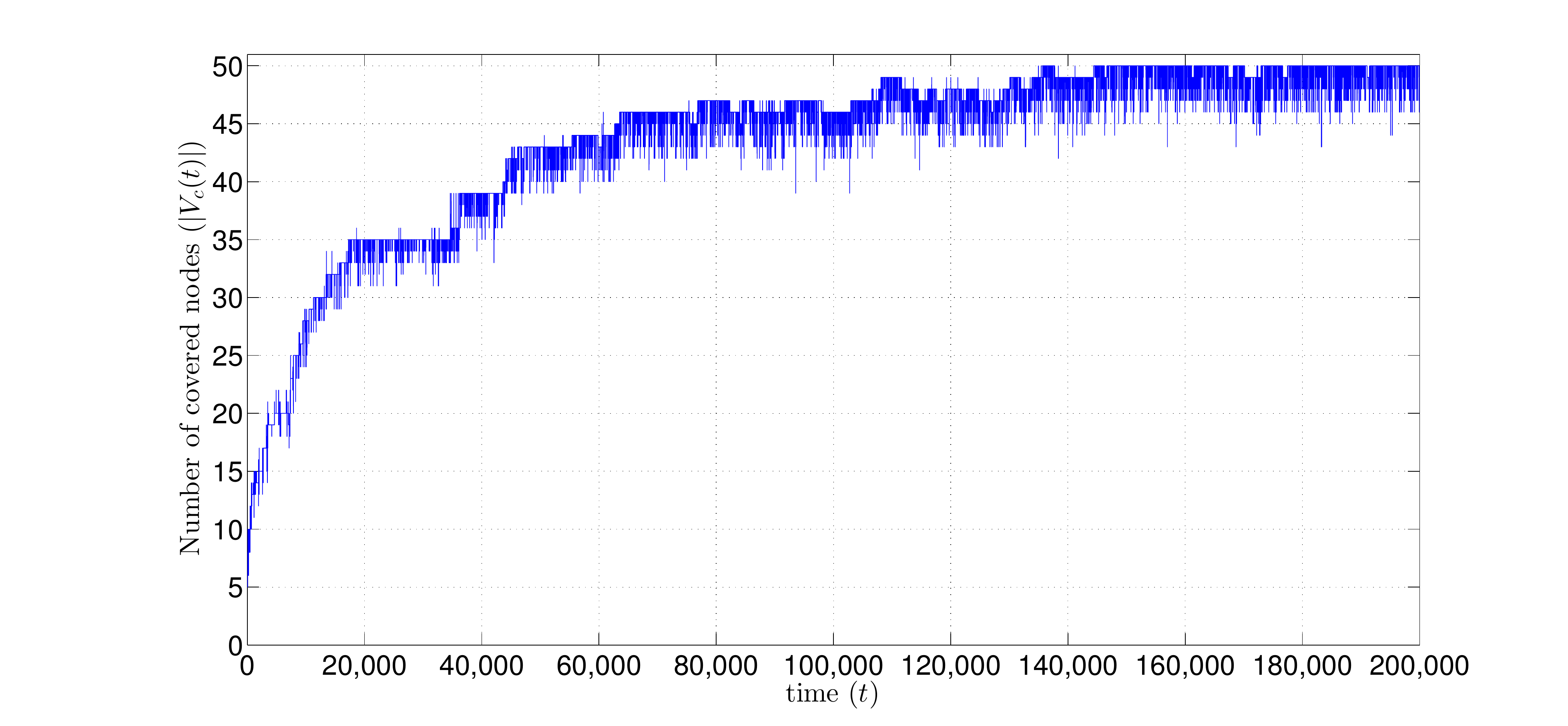}
\caption{The number of covered nodes as a function of time (CFCM).  }
\label{uncht_cf}
\end{figure}
\begin{figure}[htb]
\centering
\includegraphics[trim =40mm 5mm 0mm 10mm, clip,scale=0.29]{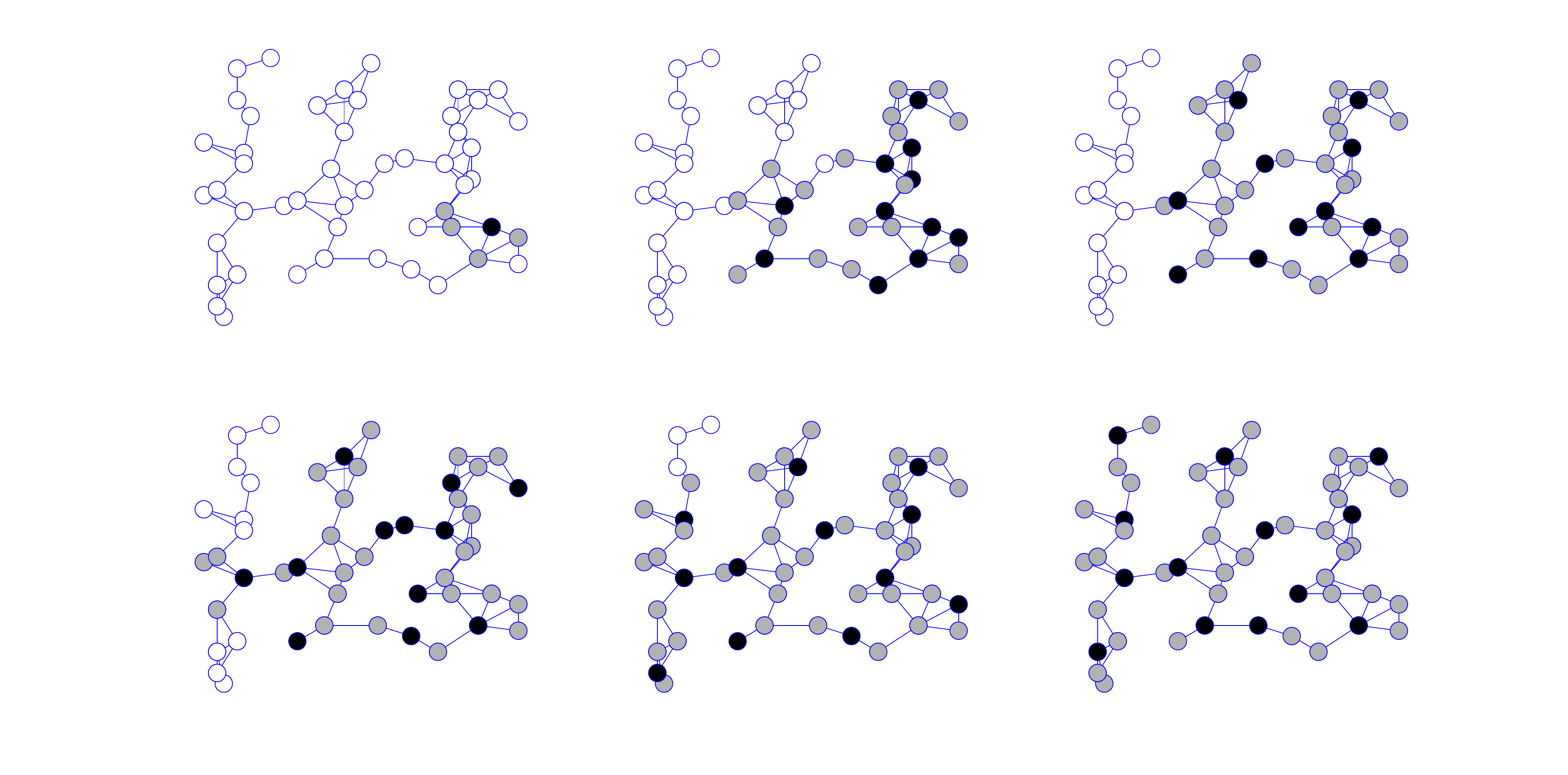}
\caption{The configuration of 13 agents on the graph at some instants of the simulation (CFCM). The nodes occupied by at least one agent are black, the nodes covered by at least one agent are gray, and the nodes that are not covered are white.  }
\label{covtime_cf}
\begin{picture}(0,0)
\footnotesize
\put(-10,123){$t=10000$}
\put(-100,123){$t=0$}
\put(78,123){$t=20000$}
\put(-100,49){$t=40000$}
\put(-10,49){$t=80000$}
\put(78,49){$t=160000$}
\end{picture}
\end{figure}
\vskip4ex
In order to compare the performance with a setting that allows for communications, we also  present a simulation of the same scenario with BLLL. The agents start at the same initial condition as the previous simulation, and BLLL is executed with the same noise parameter $\epsilon=0.015$. The number of covered nodes throughout a period of 10000 time steps is shown in Fig. \ref{cov_ones}, whereas the configuration of the agents on the graph at some instants are provided in Fig. \ref{traj_ones}. As illustrated in Fig. \ref{cov_ones}, after a sufficient amount of time, the agents maintain complete coverage with a very high probability. For $7500 \leq t \leq 10000$, the average number of covered nodes at each time step is computed as $49.76$.


\begin{figure}[htb]
\centering
\includegraphics[trim =45mm 0mm 10mm 10mm, clip,scale=0.3]{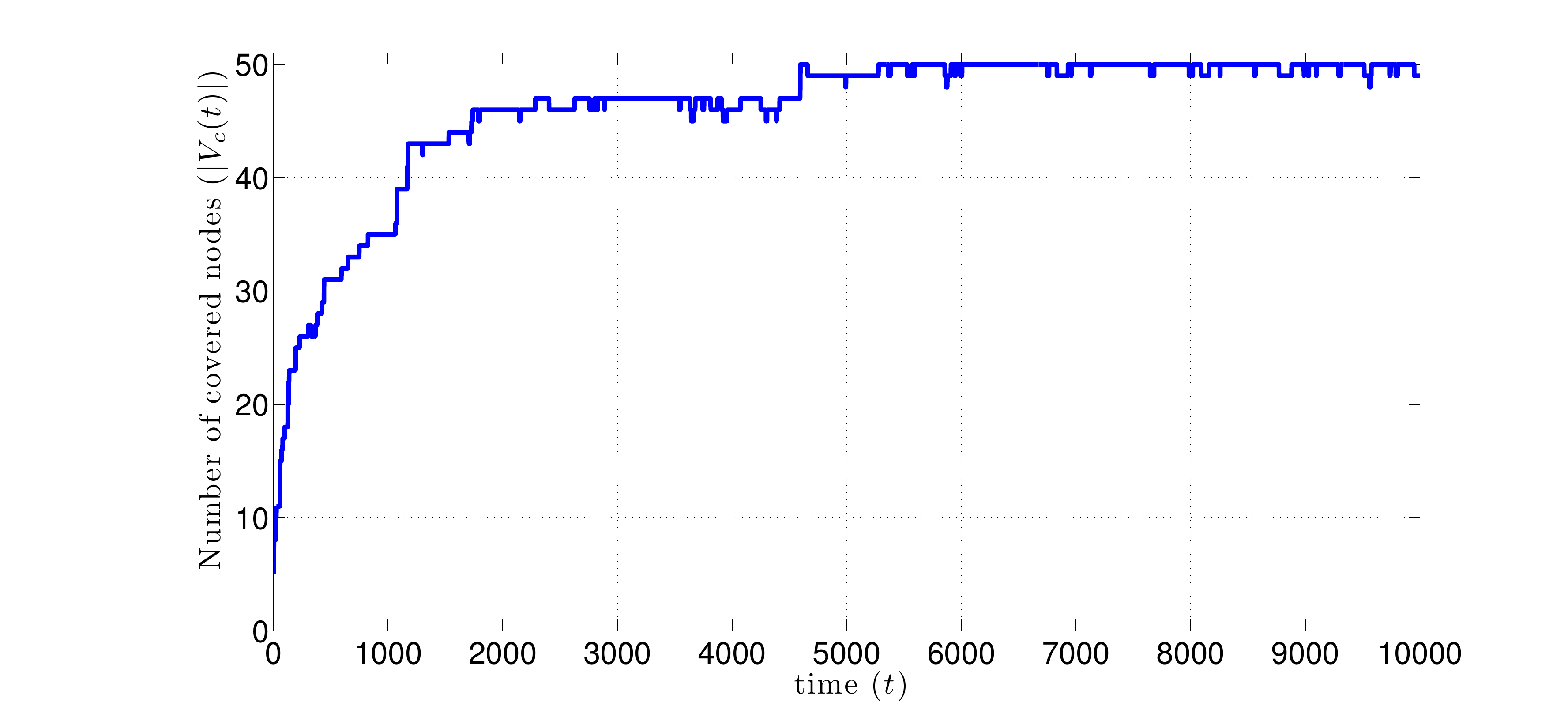}
\caption{The number of covered nodes as a function of time (BLLL). }
\label{cov_ones}
\end{figure}

\begin{figure}[htb]
\centering
\includegraphics[trim =40mm 5mm 0mm 10mm, clip,scale=0.33]{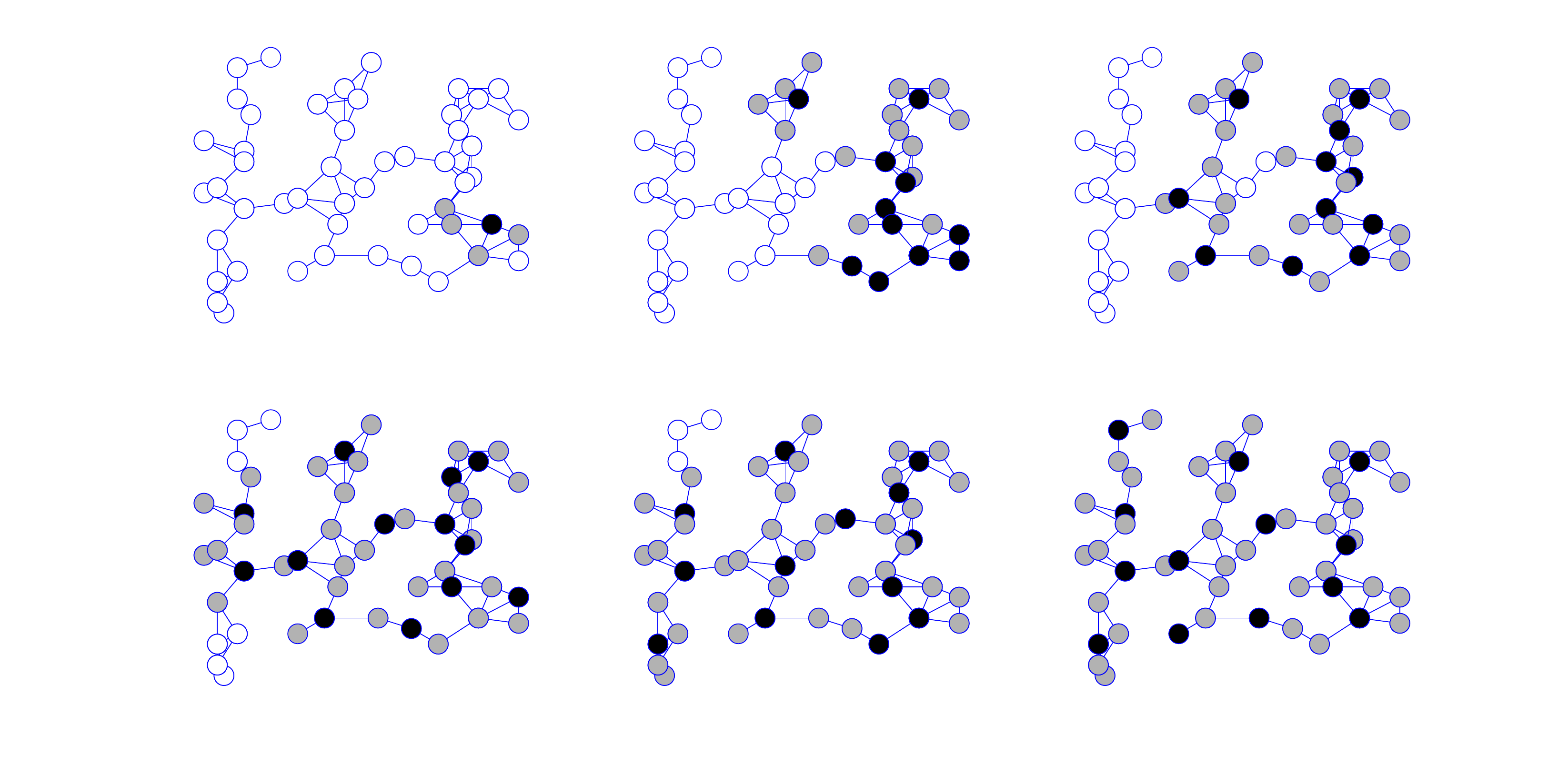}
\caption{The configuration of 13 agents on the graph at some instants of the simulation (BLLL). The nodes occupied by at least one agent are black, the nodes covered by at least one agent are gray, and the nodes that are not covered are white.  }
\label{traj_ones}
\begin{picture}(0,0)
\footnotesize
\put(-15,122){$t=350$}
\put(-102,122){$t=0$}
\put(75,122){$t=700$}
\put(-102,47){$t=1400$}
\put(-15,47){$t=2800$}
\put(75,47){$t=5600$}
\end{picture}
\end{figure}

Through the comparison of Figs. \ref{uncht_cf} and \ref{covtime_cf}. to Figs. \ref{cov_ones} and \ref{traj_ones}, it is seen that both algorithms drive the agents to some global optima in a similar fashion. However, when the agents are allowed to communicate, they can maximize the coverage much faster, as one might expect. Despite the slower convergence to the limiting distribution, the main advantage of the CFCM algorithm is that the agents do not need to know their actual utilities whose computation requires some communications in the DGC problem. As such, CFCM can be employed to optimally distribute some mobile security resources on networks, even in scenarios that do not allow for such explicit communications. 
\section{Conclusion}
\label{conclusion}

In this paper, a game theoretic approach was proposed for distributed coverage of networked systems by mobile agents with local capabilities. We considered a distributed graph coverage (DGC) problem, where the network is modeled as an undirected graph, and the agents are located on some nodes of the graph. Each agent can sense the graph structure and the presence of the other agents within its $\delta$-neighborhood, where $\delta$ is the sensing range. Any node of the graph is covered if it is within the sensing range of at least one agent. The agents move locally on the graph, and they aim to maximize the number of covered nodes. We studied this problem particularly for agents with no explicit communications among themselves. 

A game theoretic formulation of the DGC problem was obtained by designing a potential game, $\Gamma_{\text{DGC}}$. In $\Gamma_{\text{DGC}}$, the action of each agent is defined as its position on the graph, and the utility of each agent is equal to the number of nodes covered only by itself. It was shown that $\Gamma_{\text{DGC}}$ can be paired with a learning algorithm such as BLLL to maximize the coverage. However, such learning algorithms require the agents to measure their current utilities. In $\Gamma_{\text{DGC}}$, the actual utilities can not be computed without explicit communications since the agents with overlapping coverage are not necessarily within the sensing range of each other. In order to address this issue, we presented a communication-free learning algorithm, namely the CFCM. In CFCM, the agents follow a noisy best-response policy based on the estimated utilities gathered by moving around their current positions. The algorithm has a noise parameter,  $\epsilon \in \Re^+$, and a second parameter,  $r \in \Re^+$, that sets the likelihood of remaining stationary. We showed that the CFCM algorithm induces a regular perturbed Markov chain and the stochastically stable states are the coverage maximizers for sufficiently large values of $r$. A sufficient value of $r$ was derived from the topology of the graph. Some simulation results were also presented to demonstrate that the CFCM algorithm achieves optimal coverage.



\bibliography{MyReferences}

\begin{thebibliography}{10}

\bibitem{Goddard05}
W.~Goddard, S.~M. Hedetniemi, and S.~T. Hedetniemi, ``Eternal security in
  graphs,'' {\em J. Combin. Math. Combin. Comput}, vol.~52, pp.~169--180, 2005.

\bibitem{Du03}
T.~C. Du, E.~Y. Li, and A.-P. Chang, ``Mobile agents in distributed network
  management,'' {\em Communications of the ACM}, vol.~46, no.~7, pp.~127--132,
  2003.

\bibitem{Berbeglia10}
G.~Berbeglia, J.-F. Cordeau, and G.~Laporte, ``Dynamic pickup and delivery
  problems,'' {\em European Journal of Operational Research}, vol.~202, no.~1,
  pp.~8--15, 2010.

\bibitem{Reese06}
J.~Reese, ``Solution methods for the p-median problem: An annotated
  bibliography,'' {\em Networks}, vol.~48, no.~3, pp.~125--142, 2006.

\bibitem{Megiddo83}
N.~Megiddo, E.~Zemel, and S.~L. Hakimi, ``The maximum coverage location
  problem,'' {\em SIAM Journal on Algebraic Discrete Methods}, vol.~4, no.~2,
  pp.~253--261, 1983.

\bibitem{Khuller99}
S.~Khuller, A.~Moss, and J.~S. Naor, ``The budgeted maximum coverage problem,''
  {\em Information Processing Letters}, vol.~70, no.~1, pp.~39--45, 1999.

\bibitem{Owen98}
S.~H. Owen and M.~S. Daskin, ``Strategic facility location: A review,'' {\em
  European Journal of Operational Research}, vol.~111, no.~3, pp.~423--447,
  1998.

\bibitem{Caprara2000}
A.~Caprara, P.~Toth, and M.~Fischetti, ``Algorithms for the set covering
  problem,'' {\em Annals of Operations Research}, vol.~98, no.~1-4,
  pp.~353--371, 2000.

\bibitem{Howard02}
A.~Howard, M.~J. Matari{\'c}, and G.~S. Sukhatme, ``Mobile sensor network
  deployment using potential fields: A distributed, scalable solution to the
  area coverage problem,'' in {\em Distributed Autonomous Robotic Systems 5},
  pp.~299--308, Springer, 2002.

\bibitem{Schwager09}
M.~Schwager, D.~Rus, and J.-J. Slotine, ``Decentralized, adaptive coverage
  control for networked robots,'' {\em International Journal of Robotics
  Research}, vol.~28, no.~3, pp.~357--375, 2009.

\bibitem{Poduri04}
S.~Poduri and G.~S. Sukhatme, ``Constrained coverage for mobile sensor
  networks,'' in {\em IEEE International Conference on Robotics and
  Automation}, pp.~165--171, 2004.

\bibitem{Cortes04}
J.~Cort\'{e}s, S.~Mart\'{i}nez, T.~Karatas, and F.~Bullo, ``Coverage control
  for mobile sensing networks,'' {\em IEEE Transactions on Robotics and
  Automation}, vol.~20, no.~2, pp.~243--255, 2004.

\bibitem{Lloyd82}
S.~Lloyd, ``Least squares quantization in pcm,'' {\em IEEE Transactions on
  Information Theory}, vol.~28, no.~2, pp.~129--137, 1982.

\bibitem{Cortes05}
J.~Cortes, S.~Martinez, and F.~Bullo, ``Spatially-distributed coverage
  optimization and control with limited-range interactions,'' {\em ESAIM:
  Control, Optimisation and Calculus of Variations}, vol.~11, no.~4,
  pp.~691--719, 2005.

\bibitem{Kwok10}
A.~Kwok and S.~Mart{\'\i}nez, ``Deployment algorithms for a power-constrained
  mobile sensor network,'' {\em International Journal of Robust and Nonlinear
  Control}, vol.~20, no.~7, pp.~745--763, 2010.

\bibitem{Pimenta08}
L.~Pimenta, V.~Kumar, R.~C. Mesquita, and G.~Pereira, ``Sensing and coverage
  for a network of heterogeneous robots,'' in {\em IEEE Conference on Decision
  and Control}, pp.~3947--3952, 2008.

\bibitem{Durham09}
J.~W. Durham, R.~Carli, P.~Frasca, and F.~Bullo, ``Discrete partitioning and
  coverage control with gossip communication,'' in {\em ASME Dynamic Systems
  and Control Conference}, pp.~225--232, 2009.

\bibitem{Yun12}
S.~Yun and D.~Rus, ``Distributed coverage with mobile robots on a graph:
  Locational optimization,'' in {\em IEEE International Conference on Robotics
  and Automation}, pp.~634--641, 2012.

\bibitem{Zhu13}
M.~Zhu and S.~Mart\'{\i}nez, ``Distributed coverage games for energy-aware
  mobile sensor networks,'' {\em SIAM Journal on Control and Optimization},
  vol.~51, no.~1, pp.~1--27, 2013.

\bibitem{Yasin13NecSys}
A.~Y. Yaz{\i}c{\i}o\u{g}lu, M.~Egerstedt, and J.~S. Shamma, ``A game theoretic
  approach to distributed coverage of graphs by heterogeneous mobile agents,''
  in {\em IFAC Workshop on Distributed Estimation and Control in Networked
  Systems}, pp.~309--315, 2013.

\bibitem{Arslan07}
G.~Arslan, J.~Marden, and J.~S. Shamma, ``Autonomous vehicle-target assignment:
  a game theoretical formulation,'' {\em ASME Journal of Dynamic Systems,
  Measurement, and Control}, pp.~584--596, 2007.

\bibitem{Arsie09}
A.~Arsie, K.~Savla, and E.~Frazzoli, ``Efficient routing algorithms for
  multiple vehicles with no explicit communications,'' {\em IEEE Transactions
  on Automatic Control}, vol.~54, no.~10, pp.~2302--2317, 2009.

\bibitem{Huang08}
J.~Huang, Z.~Han, M.~Chiang, and H.~V. Poor, ``Auction-based resource
  allocation for cooperative communications,'' {\em IEEE Journal on Selected
  Areas in Communications,}, vol.~26, no.~7, pp.~1226--1237, 2008.

\bibitem{Jia02}
L.~Jia, R.~Rajaraman, and T.~Suel, ``An efficient distributed algorithm for
  constructing small dominating sets,'' {\em Distributed Computing}, vol.~15,
  no.~4, pp.~193--205, 2002.

\bibitem{Kuhn05}
F.~Kuhn and R.~Wattenhofer, ``Constant-time distributed dominating set
  approximation,'' {\em Distributed Computing}, vol.~17, no.~4, pp.~303--310,
  2005.

\bibitem{Abrams04}
Z.~Abrams, A.~Goel, and S.~Plotkin, ``Set k-cover algorithms for energy
  efficient monitoring in wireless sensor networks,'' in {\em International
  Symposium on Information Processing in Sensor Networks}, pp.~424--432, 2004.

\bibitem{Blume93}
L.~E. Blume, ``The statistical mechanics of strategic interaction,'' {\em Games
  and Economic Behavior}, vol.~5, no.~3, pp.~387--424, 1993.

\bibitem{Marden12}
J.~R. Marden and J.~S. Shamma, ``Revisiting log-linear learning: Asynchrony,
  completeness and payoff-based implementation,'' {\em Games and Economic
  Behavior}, vol.~75, no.~2, pp.~788--808, 2012.

\bibitem{Young93}
H.~P. Young, ``The evolution of conventions,'' {\em Econometrica: Journal of
  the Econometric Society}, vol.~61, no.~1, pp.~57--84, 1993.

\bibitem{Boyd06}
S.~Boyd, A.~Ghosh, B.~Prabhakar, and D.~Shah, ``Randomized gossip algorithms,''
  {\em IEEE Transactions on Information Theory}, vol.~52, no.~6,
  pp.~2508--2530, 2006.

\bibitem{Tarjan72}
R.~Tarjan, ``Depth-first search and linear graph algorithms,'' {\em SIAM
  Journal on Computing}, vol.~1, no.~2, pp.~146--160, 1972.

\end{thebibliography}

\end{document}